\documentclass[prd,nofootinbib,english]{revtex4-2}
\usepackage[utf8]{inputenc}

\usepackage{graphicx,color} 
\usepackage{xcolor}
\usepackage{rotating}
\usepackage{amssymb}
\usepackage{amsfonts}
\usepackage{amsmath}
\usepackage{xspace}
\usepackage{mathtools} 
\usepackage{verbatim}
\usepackage{comment}
\usepackage{enumitem}
\usepackage{physics}
\usepackage{graphicx}
\usepackage[colorlinks]{hyperref}
\usepackage{MnSymbol}
\usepackage{graphicx}
\usepackage{accents}
\usepackage[ignoreall]{geometry}

\newtheorem{theorem}{Theorem}

\newtheorem{axiom}[theorem]{Axiom}

\newtheorem{conclusion}[theorem]{Conclusion}

\newtheorem{lemma}[theorem]{Lemma}

\newtheorem{proposition}[theorem]{Proposition}
\newtheorem{remark}[theorem]{Remark}

\newenvironment{proof}[1][Proof]{\noindent\textbf{#1.} }{\ \rule{0.5em}{0.5em}}
\newcommand{\CP}[1]{{\textcolor{blue}{ #1}}}

\begin{document}
	
	\title{Spherically symmetric, asymptotically flat Berwald vacuum solutions in Finsler gravity}
	
	\author{N. Voicu}
	\email{nico.voicu@unitbv.ro}
	\affiliation{Department of Mathematics and Computer Science, Transilvania University of Brasov, Romania}
	
	\author{D.M. Birla}
	\email{diana.birla@unitbv.ro}
	\affiliation{Department of Mathematics and Computer Science, Transilvania University of Brasov, Romania}
	
	\author{C. Pfeifer}
	\email{christian.pfeifer@zarm.uni-bremen.de}
	\affiliation{ZARM, University of Bremen, 28359 Bremen, Germany.}



\begin{abstract}
So-called Berwald-Finsler spactetimes are Finsler spacetimes that are closest to pseudo-Riemannian geometry, as their canonical nonlinear connection defines an affine connection on spacetime. In spherical symmetry, these geometries can be used to describe the gravitational field outside of compact objects. We solve the Finsler gravity vacuum equation for $SO(3)$-symmetric Berwald spacetimes that are asyptotically flat, but not Ricci flat. We find that among all spherically symmetric Berwald spacetimes, only one class is compatible with asymptotic flatness and a well defined causal structure. For this class, we completely solve the Finsler gravity vacuum equation and find three families of non-Ricci flat solutions -- which represent the first non-trivial, exact spherically symmetric vacuum solutions. They are so-called $(\alpha,\beta)$-Finsler spacetimes that are constructed from a pseudo-Riemannnian metric and a 1-form.  In particular, we show, by providing a concrete example, that in Finsler geometry there exist $SO(3)$-symmetric, asymptotically flat vacuum solutions that are not Ricci flat; these solutions are promising candidates to model the gravitational field around compact objects, beyond their Riemannian description.
\end{abstract}

\maketitle

\tableofcontents

\section{Introduction}
Asymptotically flat, spatially spherically symmetric spacetimes describe the gravitational field of non-rotating compact objects and serve as foundation to find the gravitational field of realistic, axially symmetric, rotating astrophysical systems such as black holes, ordinary and neutron stars, black hole-mergers, dust clouds and many more. As long as the the objects are slowly rotating and can be treated perturbatively, the spherically symmetric spacetime approximation is sufficient to predict observables from these systems with good accuracy. 

In general relativity, the Jebsen-Birkhoff Theorem \cite{VojeJohansen:2005nd, 1923Birk} ensures that the only spatially spherically symmetric vacuum solution of the Einstein equations must be the asymptotically flat, static Schwarzschild solution. As the Einstein vacuum equations are identical to the vanishing of the Ricci curvature tensor, one can rephrase the statement of the theorem as: the only  spatially spherically symmetric vacuum solutions are given by the family of Ricci-flat, asymptotically flat, static Schwarzschild spacetimes (parametrized by the Schwarzschild radius). Yet, in theories of gravity that go beyond general relativity, this statement is not necessarily true. 

In this article, we consider Finsler gravity, based on pseudo-Finsler spacetime geometry, see for example for extensive recent reviews \cite{Heefer:2024kfi,Sanchez:2025ifx}. In the Finslerian setting, the geometry of spacetime is derived from a (pseudo-)norm of tangent vectors -- which defines the length of curves in a largely similar way as the geometry of spacetime is derived, in general relativity, from the spacetime metric of pseudo-Riemannian geometry \cite{Bao, Miron, Finsler}. Recent results indicate that Finsler geometry is a viable candidate to improve the description of the gravitational field beyond general relativity; and has drawn an increasing attention from both mathematics and theoretical physics communities \cite{Beem,Javaloyes:2018lex,Javaloyes:2022hph,Carvalho:2022sdz, Fuster:2015tua, Heefer_2023, Li:2015sja, Minas:2019urp, Papagiannopoulos:2017whb, Pfeifer:2019, Saridakis:2021vue, Stavrinos2014,Tavakol2009, Triantafyllopoulos:2018bli}.  Specific applications of Finsler geometry that are particularly to be highlighted are: 

\begin{itemize}
	\item the description of the gravitational field of kinetic gases through Finsler geometry \cite{Hohmann:2019sni,Hohmann:2020yia}, which has led to the finding of an accelerated expansion of the universe without dark energy~\cite{Pfeifer:2025tda};
	\item the interpretation, in metric affine geometry, of autoparallels that arise from parametrization-invariant actions, as geodesics on a Finsler spacetime \cite{Fuster:2020upk,Csillag:2026kdc};
	\item  the geometric description of quantum gravity phenomenology effects including deformations of local Lorentz invariance \cite{Addazi:2021xuf, Gibbons:2007iu, Kostelecky:2011qz, Lobo:2020qoa, Mavromatos:2010nk, Zhu:2022blp,Zhu:2023kjx};
	\item propagation of waves in media and on curved spacetimes \cite{Werner:2012rc,Caponio:2026ida}.
\end{itemize}


There have been several proposals for suitable field equations that determine the Finslerian geometry of spacetime dynamically, of which the physically best motivated and most promising one is the action-based scalar field equation which naturally couples the kinetic gas to Finsler geometry proposed in \cite{Hohmann:2019sni},\cite{Hohmann:2020yia} -- or its Palatini generalization -- see also \cite{Hohmann:2021zbt,Javaloyes2022PalatiniFinsler,Sanchez:2025smd} for a detailed discussion. To describe compact objects, it is important to classify and understand spatially spherically symmetric, asymptotically flat vacuum solutions of the Finsler gravity equations.\vspace{12pt}

Among all Finsler spacetimes, the ones of Berwald type \cite{Berwald1926,Szilasi2011,Szabo,Crampin,Fuster:2018djw} can be considered closest to pseudo-Riemannian geometry. Their main feature is that the geometry defining length measure for curves is genuinely Finslerian (i.e., not pseudo-Riemannian), however, their geodesics are autoparallels of an affine connection on spacetime; in other words, the so-called canonical Finslerian nonlinear connection is equivalent to an affine connection on spacetime. These classes of Finsler spacetimes are particularly interesting as they are closely related to metric-affine geometry \cite{Pfeifer:2019tyy,Fuster:2020upk,Csillag:2026kdc}. Moreover, they have the advantage that the usually rather complicated Finsler gravity equation significantly simplifies to
\begin{align}
	D^{ab}R_{ab} = 0\,;
\end{align}
here, $D^{ab}$ is a tensorial object obtained from the Berwald-Finsler metric and the tangent space coordinates, and $R_{ab}$ are the components of the Ricci tensor of the canonical affine connection, discussed in detail in Section \ref{ssec:Berw}. In contrast to general relativity, Ricci flatness is not equivalent to solving the Finsler gravity equations -- it is only sufficient.

\bigskip

Thus, the following nontrivial question is immediate: What are all spherically symmetric Berwald-Finsler spacetimes vacuum solutions? Can they be classified, what are their properties?

The starting point to answer these questions is the local classification of all spherically symmetric Berwald Finsler spacetimes, which is known from \cite{Cheraghchi:2022zgv,Voicu:2024mag}. It has been used to show  that the Jebsen-Birkhoff Theorem extends to Berwald Finsler spcetimes in the sense that $R_{ab}=0$ implies that spatially spherically symmetric Finsler spacetime geometry must be either completely flat, or pseudo-Riemannian Schwarzschild geometry \cite{Birkhoff-Th}.

\bigskip

In this article, we find for the first time spatially spherically symmetric, asymptotically flat Berwald Finsler spacetime vacuum solutions that are \emph{neither} Ricci flat, nor pseudo-Riemannian; moreover, we find all of them. They are candidates to describe the gravitational field of compact objects, beyond general relativity.



To this aim, we proceed as follows.
\begin{itemize}
		\item Using the earlier found local classification of all spatially spherically symmetric (technically $SO\left( 3\right)$-symmetric), 4-dimensional Berwald-Finsler structures \cite{Cheraghchi:2022zgv}, we show that requiring the existence of nondegenerate light cones in each tangent space, together with \emph{asymptotical flatness}, leaves just one of the five existing classes as candidate. 
		\item For this remaining class (deemed Class 3 in \cite{CQG2024}), we completely integrate the Finslerian gravity equation in vacuum. It turns out that there exist three families of solutions that are not Ricci-flat.
		\item For one of these families, we provide a concrete example which has a well defined causal structure.
\end{itemize}

The presentation of the results throughout this article is organized as follows: In Section \ref{sec: BerFS} we introduce the necessary language to discuss all classes of spatially spherically symmetric Berwald-Finsler spacetimes, before we continue in Section \ref{sec:AS} with defining and characterizing asymptotic flatness in this context. Next, in Section \ref{sec:class3} we identify the unique class (Class 3), among all spatially spherically symmetric Berwald Finsler spacetimes, that allows for asymptotically flat geometries with well-defined light cones. Afterwards, we solve the Finsler gravity equation for this class in Section \ref{sec:FEQC3} in all generality. In Section \ref{sec:ex} we present an explicit example before we come to our outlook and conclusion in Section \ref{sec:CO}.

\section{Spatially spherically symmetric Berwald Finsler spacetimes}\label{sec: BerFS}

In this section, we briefly introduce the concepts of (pseudo)-Finsler space, Finsler spacetime and Finsler gravity. Moreover, we recall the definition of Berwald-Finsler spacetimes and the classification of all spatially spherically symmetric Berwald-Finsler spacetimes introduced in \cite{Cheraghchi:2022zgv}.  All of these concepts lay the foundation for the search of non-Ricci flat, but asymptotically flat (see Sec. \ref{sec:AS}), causally well-behaved (see Sec. \ref{sec:class3}), vaccuum solutions in Section \ref{sec:FEQC3}.

We keep the the presentation here as short as possible and refer to  \cite{Cheraghchi:2022zgv,Birkhoff-Th,CQG2024} for further details.

\subsection{Pseudo-Finsler spaces, Finsler spacetimes and Finsler Gravity}

Consider a smooth $4$-dimensional manifold $M$ (spacetime) equiped with local coordinate charts~$\left( U,\left(x^{a}\right) \right)$, where indices $a,b,c,....$ run over $0,1,2,3$. Moreover, let $TM\equiv \left\{ \left( x,\dot{x}\right) ~|~x\in M,\dot{x}\in T_{x}M\right\}$, be the tangent bundle equipped with naturally induced local charts $\left( TU,\left( x^{a},\dot{x}^{a}\right)
\right)$,  that is, $\dot{x}=\dot{x}^{a}\partial _{a}$ is the decomposition of the tangent vector $\dot{x}\in T_{x}M$ in the natural local coordinate basis $\left\{ \partial _{a}\right\}$. If the chart is fixed and there is no risk of confusion, we will typically denote the coordinates $\left( x^{a},\dot{x}^{a}\right) $ simply as $\left( x,\dot{x}\right)$. Moreover, denote the natural local coordinate basis of $T_{(x,\dot{x})}TM$ by $\{\partial _{a}=\tfrac{\partial }{\partial x^{a}},\dot{\partial}_{a}=\tfrac{\partial }{\partial \dot{x}^{a}}\}$ and its dual, by $\{dx^a,d\dot x^a\}$.

A \textit{pseudo-Finsler structure} (or \textit{pseudo-Finsler metric}), \cite{Bejancu}, on $M$, is a smooth function $L:\mathcal{A}\rightarrow \mathbb{R}$ (sometimes also called Finsler Lagrangian), defined on a conic subbundle\footnote{
A conic subbundle of $TM$ is defined as an open subset $\mathcal{A}\subset
TM\setminus \{0\}$ with non-empty fibers $\mathcal{A}_{x}=\mathcal{A}\cap
T_{x}M$ at all $x\in M,$and which is stable under positive rescaling of
vectors, i.e.: $\left( x,\dot{x}\right) \in \mathcal{A}\Rightarrow \left(
x,\lambda \dot{x}\right) \in \mathcal{A},\forall \lambda >0.$} 
$\mathcal{A}\subset TM\backslash \{0\},$ with the following properties:

\begin{enumerate}
\item Positive 2-homogeneity:\ $L\left( x,\lambda \dot{x}\right) =\lambda
^{2}L\left( x,\dot{x}\right) ,$ $\forall \lambda >0.$

\item Nondegeneracy: at any $\left( x,\dot{x}\right) \in \mathcal{A}$ and in
one (then, in any) local chart around $\left( x,\dot{x}\right) $, the
Hessian:%
\begin{equation}
g_{ab}\left( x,\dot{x}\right) 
:=\dfrac{1}{2} \dot{\partial}_a\dot{\partial}_b L
\end{equation}
is nonsingular.
\end{enumerate}
We note that any pseudo-Finsler function $L$ can be continuously prolonged
as $0$ at $\dot{x}=0$.

In the above, the conic bundle $\mathcal{A}$ is called the set of \textit{%
admissible vectors} and the functions $g_{ab}=g_{ab}\left( x,\dot{x}\right) $
are the local components of the \textit{Finslerian metric tensor} attached
to $L,$ which is a mapping 
\begin{equation}
g:\mathcal{A}\rightarrow T_{2}^{0}M,(x,\dot{x})\mapsto g_{(x,\dot{x}%
)}=g_{ab}dx^{a}\otimes dx^{b}.
\end{equation}
\textit{Terminology convention:} In the following, we will typically omit
the prefix 'pseudo' and call for simplicity, \textit{Finsler} (respectively, 
\textit{Riemannian}) spaces, pseudo-Finsler (respectively,
pseudo-Riemannian) ones; whenever needed, we will indicate the signature
explicitly.

The square root $F = \sqrt{|L|}$ defines a (parametrization-invariant) length measure for admissible curves $\gamma$ on $M$ through
\begin{align}\label{eq:Flength}
	\ell[\gamma] = \int F(\gamma(\tau),\dot \gamma(\tau))\, d\tau;
\end{align}
the curve $\gamma$ is called \textit{admissible} if its canonical lift   $(\gamma, \dot \gamma:=\frac{d\gamma}{d\tau})$ to $TM$ has its image completely contained in $\mathcal{A}$ .
Going from Finsler structures to Finsler spacetimes and Finsler gravity can be done in the following steps:
\begin{itemize}
	\item \textbf{Finsler spacetimes:} In the literature, there exist various
	definitions of Finsler spacetimes, e.g., \cite{Beem,Caponio:2015hca,Hohmann:2021zbt,Javaloyes:2018lex,Lammerzahl:2018lhw,Minguzzi:2014fxa}, which differ in nuances on the precise differentiability requirements imposed on $L$. Without entering the precise details of these definitions, they all tend to agree on the following: the manifold $M$ should be 4-dimensional, connected and orientable and the
	Finslerian metric tensor $g$ must be well-defined and have Lorentzian
	signature $\left( +,-,-,-\right) $ on an appropriate cone $\mathcal{T}_{x}$
	in each tangent space (on which $L>0$); moreover, on the boundary $\partial 
	\mathcal{T}_{x},$ one must have $L=0.$. 
	
	The cones $\mathcal{T}_{x}$ are physically interpreted as \emph{future-pointing timelike cones}, representing the physical $4$-velocities at which massive particles propagate through spacetime. The boundaries $\partial \mathcal{T}_{x}$, represent the light cones, interpreted as the directions along which massless particles propagate.
	
	\item \textbf{The Cartan tensor:} A first measure of the departure of a general
	Finsler function from a Riemannian one is given by the Cartan tensor  $%
	C=C_{abc}dx^{a}\otimes dx^{b}\otimes dx^{c},$ with local components $%
	C_{abc}=C_{abc}\left( x,\dot{x}\right) $ given by:
	\begin{equation}
		C_{abc}:=\dfrac{1}{2}\dot \partial_c\partial g_{ab}
		=\dfrac{1}{4}\dot \partial_c \dot \partial_b \dot \partial_a L\,.\label{def:C_a}
	\end{equation}%
	Its trace $C_{a}:=g^{bc}C_{abc}$ measures the $\dot{x}$-dependence of $\det g:$%
	\begin{equation}
		C_{a}=\dot \partial_a \sqrt{\left\vert \det g\right\vert }\,. \label{eq:C_a_det(g)}
	\end{equation}
	
	\item \textbf{Geodesics and canonical nonlinear connection:} Curves that extremize the length measure \eqref{eq:Flength} are called Finsler geodesics. In arc-length parametrization, they are given by solutions of the Finsler geodesic equation
	\begin{align}
		\ddot x^a + G^a(x,\dot x) = 0\,.
	\end{align}
	Here $G^a$ are the so called geodesic spray components which define the \emph{canonical Cartan nonlinear connection} coefficients $G^a{}_b$:
	\begin{align}\label{def:G^a_G^a_b}
		G^a = \frac{1}{4}g^{ab}(\dot x^c \partial_c \dot{\partial}_b L - \partial_b L)\,,\quad G^a{}_b = \dot{\partial}_b G^a\,.
	\end{align}
	The nonlinear connection coefficiens define the so-called \textit{horizontal derivative} operators
	\begin{align}
		\delta_a = \partial_a - G^a{}_b \dot \partial_b\,.
	\end{align}
		
	\item \textbf{Dynamical covariant derivative:} The canonical Cartan nonlinear connection leads to the definition of the \emph{dynamical covariant derivative}, that acts on on either horizontal $X=X^{a}\delta _{a}$ or on vertical $Y=Y^{a}\dot{\partial}_{a}$ vector fields
	on $\mathcal{A}$ as:
	\begin{equation}
		\nabla X=\left( \nabla X^{a}\right) \delta _{a}=\left( \dot{x}^{b}\delta
		_{b}X^{a}+G_{~b}^{a}X^{b}\right) \delta _{a},~\quad
		\nabla Y=\left( \nabla
		Y^a\right) \dot\partial _{a}=\left( \dot{x}^{b}\delta
		_{b}Y^{a}+G_{~b}^{a}Y^{b}\right) \dot{\partial}_{a}.
	\end{equation}%
	Intuitively, the dynamical covariant derivative measures the rate of change of the respective vector fields along (canonically lifted to $TM$) geodesics $\gamma$. It is the canonical generalization of the Levi-Civita covariant derivative to Finsler geometry and also satisfies the properties:\ $\nabla L=0$, $\nabla g=0$.
	
	\item \textbf{Curvature and Finslerian Ricci scalar:} The \emph{curvature} of a nonlinear connection is, by definition
	the Lie bracket of the horizontal basis elements $\delta_a$; this is a tensor field $\mathcal{R}$ on $TM$: 
	\begin{equation}
		\mathcal{R}=\frac{1}{2}R_{~bc}^{a}\dot{\partial}_{a}\otimes dx^{b}\wedge
		dx^{c},~\ \ \ \left[ \delta _{b},\delta _{c}\right] =:R_{~bc}^{a}\dot{%
			\partial}_{a}=(\delta _{c}G_{~b}^{a}-\delta _{b}G_{~c}^{a})\dot{\partial}%
		_{a}.\   \label{def:R}
	\end{equation}%
	It measures the non-integrability of the horizontal distribution $Span\{\delta_a\}$ and serves to characterize Finslerian geodesic deviation (see, e.g.,
	\cite{Miron}). The trace of the geodesic deviation operator is given by the $2$-homogeneous scalar
	\begin{equation}
		R=R_{~ac}^{a}\dot{x}^{c},  \label{def:Finsler_Ricci_scalar}
	\end{equation}%
	called the \emph{Finslerian Ricci scalar}.
	
	\item \textbf{The Landsberg tensor:} The Landsberg tensor $P = P_{abc}dx^a \otimes dx^b \otimes dx^c$, respectively, its trace $P_a dx^a$, 
	\begin{align}
		P_{abc} = \nabla C_{abc}\,,\quad P_a = g^{bc}P_{abc}\,.
	\end{align}
	measure the change of the Cartan tensor (respectively, of its trace) along geodesics on $M$.
	
	\item \textbf{Finsler gravity:} All the structures just introduced allow us to determine Finsler spacetimes dynamically from a physical gravitational field equation, similar to the way Einstein's equations determine the pseudo-Riemannian geometry of spacetime in general relativity.
	
	Using the ($0$-homogenized) Finslerian Ricci scalar $L^{-1}R$ as a field-theoretical
	Langrange function to construct an action on the tangent bundle for Lagrangians $L$, leads to the most promising Finslerian gravitational field equation, sourced by the 1-particle distribution function $\varphi$ of a kinetic gas \cite{Hohmann:2019sni}. The resulting field equation can be expressed as
	\begin{equation}
		g^{ab}\dot{\partial}_{a}\dot{\partial}_{b}R-6\dfrac{R}{L}+2 g^{ab}[2 \dot{\partial_a} (\nabla P_b) - \nabla (\dot{\partial_a} P_b)]=\kappa \varphi .  \label{eq:field_eqn}
	\end{equation}%
	Here $\kappa $ is the gravitational coupling constant. Recently, this equation was very successfully applied to determine the evolution of the universe \cite{Pfeifer:2025tda}. In homogeneous and isotropic symmetry, it becomes the so-called \textit{Finsler-Friedmann equation}, leading to an accelerated expansion of the universe without the need of introducing dark energy.
\end{itemize}

\textbf{Particular case: Riemannian geometry.} For the case of (pseudo-)Riemannian geometry we have that 
\begin{equation}
	L=a_{ab}\left( x\right) \dot{x}^{a}\dot{x}^{b}\,,
\end{equation}
where $a_{ab}(x)$ are the components of a pseudo-Riemannian metric, and thus 
\begin{equation}
	2G^{a}\left( x,\dot{x}\right) = \Gamma _{~bc}^{a}\left( x\right) \dot{x}^{b}%
	\dot{x}^{c},~\ G_{~b}^{a}\left( x,\dot{x}\right) =\Gamma _{~bc}^{a}\left(
	x\right) \dot{x}^{c},
\end{equation}%
where $\Gamma _{~bc}^{a}$ are the Christoffel symbols of the metric $a$ and,
accordingly,%
\begin{equation}
	R_{~bc}^{a}(x,\dot x)=R_{d~bc}^{~a}(x)\dot{x}^{d},~\ 
	R(x,\dot x)=R_{d~ac}^{~a}(x)\dot{x}^{d}\dot{x}^{c}=-R_{cd}(x)\dot{x}^{c}\dot{x}^{d}\,.
\end{equation}
Here, $R_{d~bc}^{~a}$ and $R_{dc}=R_{d~ca}^{~a}$ represent the components of
the Riemannian curvature and of the Ricci tensor of $a.$ We note that the
Finsler-Ricci scalar of $a$ -- actually, the \emph{Raychaudhuri} \emph{scalar%
} of $a$ -- is different from the scalar curvature $a^{cd}R_{cd}$. Also, if
one sets $\varphi =0$ in (\ref{eq:field_eqn}) and restricts the search for $%
L $ to the class of Riemannian Finsler Lagrangians $L=a_{ab}\left( x\right) 
\dot{x}^{a}\dot{x}^{b},$ one obtains that our single equation (\ref%
{eq:field_eqn}) is equivalent to the set of 10 vacuum Einstein equations $%
R_{ab}=0.$ \\

Yet, in general, the functions $G_{~b}^{a}=G_{~b}^{a}\left( x,\dot{x}\right) 
$ are nonlinear in $\dot{x}$; the case when these turn out to be linear but $L\neq a_{ab}\left( x\right) \dot{x}^{a}\dot{x}^{b}$
deserves a special attention and will be discussed in the next subsection. We will see that, even in this most restrictive
Finslerian case, the field equation can have dramatically different
solutions from the Einstein equations.

\%end{equation}

\bigskip

\subsection{Finsler geometry of Berwald type}\label{ssec:Berw}
Berwald-Finsler structures are defined as (pseudo-)Finsler structure whose arc length parametrized geodesics are autoparallel curves of a symmetric affine connection $\Gamma$ on the base manifold $M$; denoting its local coefficients by $\Gamma _{~ab}^{c}=\Gamma _{~ab}^{c}\left( x\right) $, this is
\begin{equation}
\frac{d^{2}x^{a}}{ds^{2}}+\Gamma _{~ab}^{c}(x)\dfrac{dx^{a}}{ds}\dfrac{dx^{b}%
}{ds}=0.  \label{eq:Berwald_geo}
\end{equation}%
In other words, a Finsler space $\left( M,L\right) $ is of Berwald type if
in any local chart, the canonical spray coefficients $G^{c}$ are quadratic, $G_{~b}^{c}$ are linear and $\dot{\partial}_{a}G_{~b}^{c}=%
\Gamma _{~bc}^{a}\left( x\right) $ are independent of $\dot{x}$: 
\begin{equation}
2G^{c}=\Gamma _{~ab}^{c}(x)\dot{x}^{a}\dot{x}^{b}~\ \Leftrightarrow
~G_{~b}^{c}=\Gamma _{~ab}^{c}(x)\dot{x}^{a}~\ \Leftrightarrow ~\dot{\partial}%
_{a}G_{~b}^{c}=\Gamma _{~ab}^{c}\left( x\right) .
\label{eq:coeffs_Berwald_spray}
\end{equation}
Thus, the dynamical covariant derivative $\nabla$ (on $\mathcal{A} \subset TM$) induces an affine connection $\Gamma$ on the base manifold $M$. 

As a consequence, the curvature of the canonical nonlinear connection and the Finslerian Ricci scalar of a Berwald space are given by
\begin{equation}\label{eq:nlcurvberw}
R_{~bc}^{a}(x,\dot x)=R_{d~bc}^{~a}\left( x\right) \dot{x}^{d},
\end{equation}%
where $R_{d~bc}^{~a}\left( x\right) $ are the curvature components of the induced affine connection on $M$. Moreover, one can show that for Berwald Finsler structures the Landsberg tensor vanishes, see also~\cite{Bao},
\begin{align}
	P_{abc} = \nabla C_{abc} = 0\quad \Rightarrow\quad  P_a = \nabla C_{a}=0\,.
\end{align}

The above remarks on the curvature and the Lansberg tensor lead to a
considerable simplification of  the Finsler gravity equation \eqref{eq:field_eqn}. In vacuum ($\varphi =0$), it reads
\begin{equation}
g^{ab}\dot{\partial}_{a}\dot{\partial}_{b}R-6\dfrac{R}{L}=0 \quad \Leftrightarrow \quad  \left( g^{ab}(x,\dot x) - 3\frac{\dot x^a \dot x^b}{L(x,\dot x)} \right) R_{ab}(x)= 0.
\label{eq:vacuum_Berwald}
\end{equation}

\textbf{Remark (}\emph{Connection Ricci tensor vs. Finsler-Ricci scalar):}
The $\dot{x}$-derivatives of the Finslerian Ricci scalar $\dot{\partial}_{a}\dot{\partial}_{b} R(x,\dot x)$ are
related to the affine Ricci tensor components $R_{ab}(x)=R_{a~bc}^{~c}(x)$ as:
\begin{equation}
R(x,\dot x)=\CP{-}R_{ab}(x)\dot{x}^{a}\dot{x}^{b}\Rightarrow \dot{\partial}_{a}\dot{\partial}_{b} R=-2R_{ab}.
\label{eq:R_to_R_ab}
\end{equation}

It thus becomes obvious that $R_{ab} = 0$ is a sufficient condition (but not a necessary one, as we will prove below) for a Berwald-Finsler function $L$ to represent a vacuum solution of the Finsler gravity equation \eqref{eq:vacuum_Berwald}. As, in spatial spherical symmetry, it constrains a Berwald-Finsler spacetime to be either Riemannian (Schwarzschild), or flat 
(see \cite{Birkhoff-Th}), physically interesting vacuum solutions are those that do \textit{not} satisfy it.

\subsection{All spatially spherically symmetric Berwald metrics}\label{ssec:classif_Berwald}

To \emph{locally}\footnote{It might be possible to extend the locally characterized spacetimes to further coordinate patches by following geodesics -- as one does, e.g., for the Schwarzschild solution, by passing to Kruskal-Szekeres coordinates. We will not discuss such extensions here.} 
characterize all spatially spherically symmetric Finsler structures of Berwald type using coordinates, we can assume with no loss of generality that we are working on $M$ as an open subset of $\mathbb{R}^{4}\backslash \{0\}$, equipped with spherical coordinates $\left( t,r,\theta ,\phi \right)$. Moreover, we assume that $M$ has ends, i.e., it admits points with $r\rightarrow \infty$.

\bigskip

A pseudo-Finsler function $L:\mathcal{A\rightarrow }\mathbb{R}$ as above is called \textit{spatially spherically symmetric} if:

\begin{enumerate}
	\item $SO\left( 3\right) $ is a subgroup of the isometry group of $L$, i.e., $SO(3)$ acts by diffeomorphisms $\Psi:M \rightarrow M$ such that  $L(\Psi\left(x\right) ,d\Psi _{x}\left( \dot{x}\right)) = L\left( x,\dot{x}\right),$ for all $\left( x,\dot{x}\right)~\in~\mathcal{A}$.
	\item The elements $\ \in SO\left( 3\right) $ leave the coordinate $t$ (usually interpreted as time) invariant.
\end{enumerate}
In naturally induced coordinates $\left( x,\dot{x}\right) :=\left(
t,r,\theta ,\phi ,\dot{t},\dot{r},\dot{\theta},\dot{\phi}\right) $ on $TM,$
any spatially spherically symmetric Finsler metric is known to be, \cite{Pfeifer:2011xi}, of the form
\begin{equation*}
L(x,\dot{x})=L(t,r,\dot{t},\dot{r},w),\quad w^{2}=\dot{\theta}^{2}+\dot{\phi}%
^{2}\sin ^{2}\theta \,.
\end{equation*}

Nontrivially Finslerian (i.e., non-quadratic)\ Berwald metrics with spatial
spherical symmetry, as well as their canonical affine connections have been locally classified in \cite{Cheraghchi:2022zgv}. It was found that their coefficients $\Gamma^a{}_{bc}(x)$ of are encoded into 10 functions $k_{i}=k_{i}(t,r)$ and the curvature
tensor $R^a{}_{bcd}(x)$ of $\Gamma$ is described by 14 coefficients $a_{i}=a_{i}(t,r);$ see Appendix \ref{app:defs} for the precise definitions of $k_{i}$ and $a_{i}$ and for the
relations between them. Depending on which relations hold between $k_{i}$ and $a_{i} $, the Berwald condition
\begin{equation}\label{eq:bercond}
	\delta_a L = \partial_a L - \Gamma^b{}_{ac}(x)\dot{\partial_ b}L = 0\,,
\end{equation}
can be solved and one finds two major cases of spatially spherically symmetric Berwald Finsler structures that contain several subclasses:

\begin{itemize}
\item[Case I.] \textbf{If }$k_{7},k_{8},k_{9},k_{10}$\textbf{\ are not all
zero}, one may assume with no loss of generality, that $k_{10}\not=0.$
Denoting%
\begin{equation}
a:=\dfrac{k_{7}}{k_{10}},~\ \ b=\dfrac{k_{8}}{k_{10}},~\ \ c=\dfrac{%
k_{9}k_{10}-k_{7}k_{8}}{k_{10}^{2}},  \label{def:abc}
\end{equation}%
the connection must satisfy:%
\begin{equation} 
\begin{split}
A& :=b\left( aa_{1}+a_{2}\right) +\left( ab+c\right) \left(
aa_{3}+a_{4}\right) -a_{5}\left( 2ab+c\right) =0\,, \\
B& :=a\left( aa_{3}+a_{4}\right) -\left( aa_{1}+a_{2}\right) =0, \\
C& :=\left( ab+c\right) a_{3}+b\left( aa_{3}+a_{4}\right) +b(a_{1}-2a_{5})=0,
\\
a_{6}& =aa_{7},~a_{8}=ba_{7},~\ a_{9}=\left( ab+c\right) a_{7},~\
a_{10}=aa_{11},~a_{12}=ba_{11},~\ \ a_{13}=\left( ab+c\right) a_{11}.
\end{split}
\label{eq:ABC}
\end{equation}

Further, depending on the quantities%
\begin{eqnarray}
D &=&aa_{3}-a_{1}+a_{5},~\ \ E=ba_{3},~\ \ F=aa_{3}-a_{1},  \label{eq:DEF} \\
G &=&2\left( k_{1}-k_{4}a\right) ,~\ \ \tilde{G}=G-2k_{8},~\ \ \ H=2\left(
k_{2}-k_{6}a\right) ,~\ \ \ \tilde{H}=H-2k_{9},
\end{eqnarray}%
one obtains in this case, three classes:
\end{itemize}

\begin{enumerate}
\item[\textbf{Class 1:}] $D\neq 0$. Then, $\left[ \delta _{t},\delta _{r}%
\right] \not=0$ and $L$ has a \textit{power law }formula:%
\begin{equation}
L=\vartheta (t,r)u^{2-2\lambda }\left( v+\rho u^{2}\right) ^{\lambda }\,,
\label{def:power_law}
\end{equation}%
where:%
\begin{equation}
u=\dot{t}-a\dot{r},\quad v=c\dot{r}^{2}+2b\dot{t}\dot{r}-w^{2}\,.
\label{def:uv}
\end{equation}%
$\lambda =\frac{F}{D}\not=1$ is a constant\footnote{%
The value $\lambda =1$ corresponds to Riemannian metrics, which thus do not
satisfy the hypothesis of being properly Finslerian.}, $\rho =\frac{E}{D}$
and $\vartheta (t,r)=\exp \left( \int \left( G-\lambda \tilde{G}\right)dt+\left( H-\lambda \tilde{H}\right) dr\right) $ (where the integral is
taken along an arbitrary path in the $\left( t,r\right) $ plane, joining
some $\left( t_{0},r_{0}\right) $ to $\left( t,r\right) $).

\item[\textbf{Class 2:}] $D=0,\, E,F\neq 0$. Then $\left[ \delta _{t},\delta_{r}\right] \not=0$ and $L$ is of \textit{exponential type:}%
\begin{equation}
L=\varphi (t,r)u^{2}\exp (\frac{v}{u^{2}}\mu )\,,
\label{def_exponential_law}
\end{equation}%
where $\mu =\frac{F}{E}$, $u,v$ are as above and $\varphi (t,r)=\exp \left(
\int \left( G+2k_{4}b\mu \right) dt+\left( H+2k_{6}b\mu \right) dr\right) $
(where the integral is taken along an arbitrary path in the $\left(
t,r\right) $ plane, joining some $\left( t_{0},r_{0}\right) $ to $\left(
t,r\right) $).

\item[\textbf{Class 3:}] $D=E=F=0.$ Then, $\left[ \delta _{t},\delta _{r}%
\right] =0$ and:%
\begin{equation}
L=u^{2}\Xi (z),\quad z=\frac{v(\dot{\tilde{t}},\dot{\tilde{r}},w)}{u(\dot{%
\tilde{t}},\dot{\tilde{r}},w)^{2}}\,,  \label{def_class_3}
\end{equation}%
where $\Xi =\Xi (z)$ is an arbitrary function of the single variable $z=%
\dfrac{v}{u^{2}}$ and $u,v$ as defined above are, up to a suitable
coordinate change, independent of the new coordinates $\tilde{t}$ and $%
\tilde{r}$. In this case, $b$ and $c$ cannot simultaneously vanish.
\end{enumerate}

\begin{itemize}
\item[Case II.] \textbf{If} $k_{7}=k_{8}=k_{9}=k_{10}=0$, we distinguish two
more classes:
\end{itemize}

\begin{enumerate}
\item[\textbf{Class 4:}] $\left[ \delta _{t},\delta _{r}\right] =0.$ Then,
up to a suitable coordinate change $\left( t,r\right) \mapsto \left( \tilde{t%
},\tilde{r}\right) $, $L$ is given by a free function of two variables: 
\begin{equation}
L=L(\dot{t},\dot{r},w)=\dot{t}^{2}L(1,p,s)\,,\quad p=\frac{\dot{r}}{\dot{t}}%
,\quad s=\frac{w}{\dot{t}}\,.  \label{def_class_5}
\end{equation}

\item[\textbf{Class 5:}] $\left[ \delta _{t},\delta _{r}\right] \not=0$.
Then $a_{1}a_{4}-a_{2}a_{3}\not=0$ and%
\begin{equation}
L=w^{2}\xi (q)\,,\quad q=\frac{\dot{t}e^{I\left( p\right) -\varphi }}{w},~\
\ \ p=\frac{\dot{r}}{\dot{t}},  \label{def_class_4}
\end{equation}%
where $\xi =\xi (q)$ is an arbitrary function of the single variable $q;$ $%
\varphi $ is a function of $t,r$ only and%
\begin{equation}
I=\int \dfrac{a_{1}+a_{2}p}{a_{2}p^{2}+\left( a_{1}-a_{4}\right) p-a_{3}}dp.
\label{I_p}
\end{equation}
\end{enumerate}

\bigskip

In the following, we will prove that, in the Finslerian case, there exist
vacuum solutions of \eqref{eq:vacuum_Berwald} that are not Ricci flat.

\section{Asymptotic flatness}\label{sec:AS} 

Finding the most general non-Ricci flat Berwald vacuum solution of  the Finsler gravity equation is a very difficult endavour and still work in progress. As a first step towards this goal, we focus in this article on all asymptotically flat solutions. Thus, before proceeding to solve the equation, we clarify the notion of asymptotically flat Berwald-Finsler structures here and identify, among all classes introduced in the previous section \ref{ssec:classif_Berwald}, the only physical viable class, Class 3.


By \emph{asymptotic flatness} of a spatially spherically symmetric Berwald-Finsler structure, we will understand that, as $r\rightarrow \infty ,$ all the curvature components of
its canonical affine connection tend to zero. In our notations, this is:
\begin{equation}
\underset{r\rightarrow \infty }{\lim }a_{i}=0,~\ \ \ i=1,...,14.
\label{def:asympt_flatness}
\end{equation}

Here is an immediate result, yet with quite strong implications.

\begin{remark}
\label{remark:a14}: If, for a torsion-free, $SO(3)$-symmetric affine connection on $M$, with coefficients given by the functions $k_i=k_i(t,r)$ as in Appendix \ref{appx:A}, any of the following situations holds

\begin{enumerate}
\item $k_{7}=k_{8}=k_{9}=k_{10}=0,$ or

\item $k_{10}\not=0$ and $2ab+c=0,$

then the connection cannot be asymptotically flat.
\end{enumerate}
\end{remark}
The statements follow immediately from the identity, see \eqref{eqs:a14},
\begin{equation}
a_{14}=1+k_{7}k_{8}+k_{9}k_{10}=1+k_{10}^{2}\left( 2ab+c\right) ,
\label{a14}
\end{equation}
which makes it clear that, in any of the two above cases, we get $a_{14}=1,$
which is incompatible with~\eqref{def:asympt_flatness}.

\bigskip

The above remark excludes three of the five classes of Berwald-Finsler $%
SO\left( 3\right)$-symmetric metrics:
\begin{itemize}
\item Class 2 (exponential metrics), since, in this case, solving the system 
$A=B=C=D,$ we find,~\cite{Birkhoff-Th}, that we must necessarily have $%
2ab+c=0.$
\item Classes 4 and 5, since for these, $k_{7}=k_{8}=k_{9}=k_{10}=0$.
\end{itemize}

One more class, Class 1 (power law metrics), can be excluded as it does not possess well defined light cones.
It can in principle be asymptotically flat; yet, see Appendix \ref{appx:Class_1}, the asymptotical flatness condition leads
in this case to degenerate light cones at all spacetime points, hence to no physically viable solutions. \vspace{10pt}

\textbf{Conclusion:}\ The only class of $SO\left( 3\right) $-symmetric
Berwald structures which can contain \emph{asymptotically flat spacetime
metrics} is Class 3 of Section \ref{ssec:classif_Berwald}, given by the condition
\begin{equation}
\left[ \delta _{t},\delta _{r}\right] =0,~k_{7},k_{10}~-~not~both~zero.
\end{equation}%
In the following, we will therefore focus on this class and assume assume without loss of generality,
\begin{equation}
k_{10}\not=0\,.
\end{equation}
The situation when $k_{7}\not=0$ can be treated in a completely similar manner - just, with the roles of $k_{8}$ and $k_{9}$ reversed.

\section{The structure of Class 3 metrics}\label{sec:class3}
Having identified Class 3 of Section \ref{ssec:classif_Berwald} as a viable candidate for non-Ricci flat solutions of the Finsler gravity equation, let us collect and recall several properties of these Berwald-Finsler structures -- especially, the fact that Finsler Lagrangians of Class 3 Berwald-Finsler structures as so-called $(\alpha,\beta)$-metrics $L=L(\mathbf{A},\mathbf{B})$, that are constructed from a pseudo-Riemannian metric factor $\mathbf{A} = a_{bc}(x)\dot x^b \dot x^c$ and a $1$-form factor $\mathbf{B} = b_c(x)\dot x^c$. This is of core importance in our study of the field equation, to be done in Section \ref{sec:FEQC3}.\\

In the following, assume that our input is an $SO(3)$ (spatially spherically) symmetric, torsion-free affine connection $\Gamma$ on $M$, with connection coefficients $\Gamma^a{}_{bc}(x)$ characterized by the nonvanishing functions $k_i(x), i=1,...,10$, which satisfies the Finsler metrizability constraints \eqref{eq:bercond} and the Class 3 conditions, i.e\,,
\begin{equation}
A=B=C=D=E=F=0,  \label{eq:class_3_AF}
\end{equation}%
$k_{10}\not=0$ and $b,c\ $- not both zero. For the covariant derivative of objects on $M$ with respect to this affine connection we will use the semicolon notation, for example $b_{a;c}$.\\

For this class, we already know from \cite{CQG2024} the following result:
\begin{lemma}\label{lem:Class_3} Assume $L$ belongs to Class 3, then, $L$ must be of the form:%
\begin{equation}
L=(e^{\mathcal{G}}u^{2})\Theta \left( ze^{-\left( \mathcal{G}-2\mathcal{K}%
\right) }+\mathcal{M}\right) ,  \label{general_sol_Berwald_class_3}
\end{equation}%
where $z:=\dfrac{v}{u^{2}},$ $\Theta $ is a free real function of a single
variable, $u=\dot{t}-a\dot{r}$, $v=c\dot{r}^{2}+2b\dot{t}\dot{r}-w^{2}$ and $\mathcal{G},\mathcal{K}$ and $\mathcal{M}$ are functions of $t$
and $r$ that solve the system:
\begin{eqnarray}
2\left( k_{1}-k_{4}a\right) &=&\partial _{t}\mathcal{G},~\ 2\left(
k_{2}-k_{6}a\right) =\partial _{r}\mathcal{G},~\ \ k_{8}=\partial _{t}%
\mathcal{K},~\ \ k_{9}=\partial _{r}\mathcal{K},  \label{def_G_K} \\
e^{-\left( \mathcal{G}-2\mathcal{K}\right) }bk_{6} &=&\dfrac{1}{2}\partial
_{r}\mathcal{M},~\ \ \ e^{-\left( \mathcal{G}-2\mathcal{K}\right) }bk_{4}=%
\dfrac{1}{2}\partial _{t}\mathcal{M}.  \label{def_M}
\end{eqnarray}
\end{lemma}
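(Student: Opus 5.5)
We prove the lemma by solving the Berwald condition \eqref{eq:bercond}, $\delta_a L=0$, directly for an $SO(3)$-invariant $L$ in the variables adapted to Class 3.

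\textbf{Step 1: reduce $L$ to a function of $(t,r,u,z)$.} By the classification of Section \ref{ssec:classif_Berwald}, $L$ is $2$-homogeneous and depends only on $(t,r,\dot t,\dot r,w)$. Since $u$ is linear and $v$ quadratic in $(\dot t,\dot r,w)$, homogeneity lets us write $L=u^{2}\,\Phi(t,r,z)$ with $z=v/u^{2}$. Two ingredients are needed for this.
\begin{itemize}
\item The angular equations of $\delta_a L=0$ in Case I force $L$ to depend on the velocities only through $u$ and $v$. This is precisely how \eqref{def:power_law}--\eqref{def_class_3} arise in \cite{Cheraghchi:2022zgv}, so we may assume it.
\item The conditions $D=E=F=0$ (together with $A=B=C=0$) remove the extra mixing that produces the power-law and exponential forms of Classes 1 and 2.
\end{itemize}

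\textbf{Step 2: split the $t$- and $r$-components.} We compute $\delta_t L$ and $\delta_r L$ using the Class 3 connection coefficients in terms of the $k_i$ (Appendix \ref{app:defs}). The key structural fact is how the horizontal derivatives act on $u$ and $v$:
\begin{itemize}
\item $\delta_t u$ and $\delta_r u$ are multiples of $u$, with coefficients $k_1-k_4a$ and $k_2-k_6a$;
\item $\delta_t v$ and $\delta_r v$ are combinations of $v$ and $u^{2}$, with $v$-coefficients built from $k_8,k_9$ and $u^2$-coefficients proportional to $bk_4$ and $bk_6$.
\end{itemize}
The vanishing of the $\dot t\dot r$, $\dot r^{2}$ and remaining cross terms is exactly what $A=\dots=F=0$ and $[\delta_t,\delta_r]=0$ should guarantee; this bookkeeping is routine. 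Substituting into $\delta L=0$ and dividing by $u^{2}$ gives two first-order linear PDEs for $\Phi$:
\begin{align}
\partial_t\Phi+2(k_1-k_4a)\Phi+\big[(2k_8-2(k_1-k_4a))z+2bk_4\big]\partial_z\Phi&=0,\nonumber\\
\partial_r\Phi+2(k_2-k_6a)\Phi+\big[(2k_9-2(k_2-k_6a))z+2bk_6\big]\partial_z\Phi&=0.\nonumber
\end{align}
Signs and factors will be fixed in the actual computation.

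\textbf{Step 3: integrate by characteristics.} The commutation $[\delta_t,\delta_r]=0$ is the integrability condition for these two PDEs. Equivalently, it makes the forms $2(k_1-k_4a)dt+2(k_2-k_6a)dr$ and $k_8dt+k_9dr$ closed, which yields $\mathcal{G}$ and $\mathcal{K}$ as in \eqref{def_G_K}. Next, the $z$-characteristic equation is linear inhomogeneous, so we use the integrating factor $e^{-(\mathcal{G}-2\mathcal{K})}$. This shows that $\zeta=ze^{-(\mathcal{G}-2\mathcal{K})}+\mathcal{M}$ is a first integral, provided $d\mathcal{M}=2e^{-(\mathcal{G}-2\mathcal{K})}b(k_4dt+k_6dr)$, which is \eqref{def_M}. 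The closedness of this last form again follows from the Class 3 relations. The amplitude equation is solved by $\Phi=e^{\mathcal{G}}\Theta(\zeta)$ with the sign convention above, and this gives \eqref{general_sol_Berwald_class_3}.

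The main obstacle is Step 2. We must verify that the Class 3 constraints kill every term that is not of the two types listed, and that the resulting $1$-forms are closed, i.e.\ that $[\delta_t,\delta_r]=0$ translates exactly into $d(\cdot)=0$ for the $\mathcal{G}$, $\mathcal{K}$ and $\mathcal{M}$ forms. We will first try to derive these closedness conditions directly from the $a_i$-expressions of the curvature component $R^{a}{}_{btr}$ in Appendix \ref{app:defs}.
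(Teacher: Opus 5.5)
The paper does not prove this lemma; it quotes it from \cite{CQG2024}. Your route is the natural derivation: the angular part of $\delta_aL=0$ plus homogeneity give $L=u^2\Phi(t,r,z)$, and the $t$- and $r$-equations are then integrated by characteristics. Its two first integrals, $e^{\mathcal G}u^2=\mathbf{B}^2$ and $ze^{-(\mathcal G-2\mathcal K)}+\mathcal M=\mathbf{A}/\mathbf{B}^2$, are exactly what the paper uses right after the lemma. However, two things in Steps 2--3 are wrong as written.

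\emph{Signs.} With $\delta_a=\partial_a-\Gamma^b{}_{ac}\dot x^c\dot\partial_b$ one finds
\begin{align*}
\delta_t u&=-(k_1-ak_4)u, & \delta_t v&=-2k_8v-2bk_4u^2,\\
\delta_r u&=-(k_2-ak_6)u, & \delta_r v&=-2k_9v-2bk_6u^2.
\end{align*}
So the correct $t$-equation is $\partial_t\Phi-2(k_1-ak_4)\Phi+\big[2(k_1-ak_4-k_8)z-2bk_4\big]\partial_z\Phi=0$, and similarly for $r$. Every term except $\partial_t\Phi$ has the opposite sign from your display. For the equations you wrote, $\zeta$ is not a first integral and the amplitude would come out as $e^{-\mathcal G}$, so the conclusion of Step 3 does not follow from Step 2. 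With the corrected signs, $\delta_a(e^{\mathcal G}u^2)=0$ and $\delta_a\zeta=0$ are equivalent to \eqref{def_G_K}--\eqref{def_M}. Writing $\Phi=e^{\mathcal G}\tilde\Theta(t,r,\zeta)$ then turns the system into $\partial_t\tilde\Theta=\partial_r\tilde\Theta=0$ at fixed $\zeta$, which gives the ``must be'' part.

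\emph{Which hypothesis does what.} $A,\dots,F$ involve only $a_1,\dots,a_5$, i.e.\ the components of $[\delta_t,\delta_r]$, so they cannot remove the cross terms in $\delta u,\delta v$. That is done by the metrizability relations \eqref{eq:metrizabillity_conds_a7}:
\begin{itemize}
\item $a_6=aa_7$ and $a_{10}=aa_{11}$ are equivalent to the formulas for $\partial_ta,\partial_ra$ that give $\delta u\propto u$;
\item $a_8=ba_7$, $a_9=(ab+c)a_7$, $a_{12}=ba_{11}$ and $a_{13}=(ab+c)a_{11}$ are equivalent to the formulas for the derivatives of $b,c$ that put $\delta v$ in the span of $v$ and $u^2$.
\end{itemize}
This holds in all of Case I, as does the reduction to $u^2\Phi(t,r,z)$, so the second bullet of your Step 1 is misplaced. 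Class 3 enters only through integrability, and you can see exactly how without grinding through the $a_i$. On one side, $[\delta_t,\delta_r]=R^a{}_{tr}\dot\partial_a$ together with $A=B=C=0$ gives $[\delta_t,\delta_r]u=-Fu$ and $[\delta_t,\delta_r]v=2a_5v+2Eu^2$. On the other side, the formulas above give (once $\mathcal G,\mathcal K$ exist)
\begin{align*}
[\delta_t,\delta_r]u&=\big(\partial_r(k_1-ak_4)-\partial_t(k_2-ak_6)\big)u,\\
[\delta_t,\delta_r]v&=2(\partial_rk_8-\partial_tk_9)v+2e^{\mathcal G-2\mathcal K}\big(\partial_r(e^{-(\mathcal G-2\mathcal K)}bk_4)-\partial_t(e^{-(\mathcal G-2\mathcal K)}bk_6)\big)u^2 .
\end{align*}
Hence $F=0$, $a_5=D-F=0$ and $E=0$ are precisely the closedness of the $\mathcal G$-, $\mathcal K$- and $\mathcal M$-forms. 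The same computation gives $[\delta_t,\delta_r](u^2\Phi)=u^2\big(-2F\Phi+2(Dz+E)\partial_z\Phi\big)$. This is the extra ODE in $z$ that forces the power-law and exponential forms of Classes 1 and 2.
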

It was shown in \cite{CQG2024} that, if the given connection satisfies the
Finsler metrizability conditions, then the functions $\mathcal{G},\mathcal{K}
$ and $\mathcal{M}$ solving (\ref{def_G_K})-(\ref{def_M}) always exist.\\

In order to understand the structure of the Class 3 Berwald Finsler geometries, it is very convenient to introduce a $1$-form $\mathbf{b} = b_c(x)dx^c$ and a pseudo-Riemannian metric $\mathbf{a} = a_{bc}(x)dx^b \otimes dx^c$ on spacetime, as follows:
\begin{enumerate}
	\item Choosing $\Theta = 1$ in \eqref{general_sol_Berwald_class_3}, one finds $L=\mathbf{B}^2$, where 
	\begin{equation}
		\mathbf{B}=e^{\frac{\mathcal{G}}{2}}u~=~e^{\frac{\mathcal{G}}{2}}\left( \dot{t}-a\dot{r}\right)\,. \label{B}
	\end{equation}
	As $\mathbf{B}$ is linear in $\dot x$, it can be understood as $\mathbf{B}=b_c(x)\dot x^c$, induced by the $1$-form
	\begin{equation}
		\mathbf{b}=e^{\frac{\mathcal{G}}{2}} dt + -ae^{\frac{\mathcal{G}}{2}} dr
	\end{equation}
	on $M$. Moreover, as  \eqref{def_G_K}-\eqref{def_M} guarantee that the Berwald condition $\delta_a L = \delta_a \mathbf{B}^2 = 0$ holds, this $1$-form is absolutely parallel with respect to the affine connection on $M$, i.e., it satisfies $b_{d;c} = 0$.
	
	
	\item Alternatively,  choosing in \eqref{general_sol_Berwald_class_3}, $\Theta :=id.$, we find that $L$ is a quadratic function in $\dot{x}$:
	\begin{equation}
		L=\mathbf{A}:=e^{\mathcal{G}}\left( ve^{-\left( \mathcal{G}-2\mathcal{K}\right)}+\mathcal{M}u^{2}\right) =ve^{2\mathcal{K}}+\left( e^{\mathcal{G}}\mathcal{M}\right) u^{2}\,.  \label{A}
	\end{equation}
	The function $\mathbf{A}$ defines a pseudo-Riemannian metric
	\begin{equation}
		\mathbf{a} = a_{bc}(x)dx^b \otimes dx^c
	\end{equation}
	with components $a_{bc} = \dot\partial_b \dot{\partial}_c\mathbf{A}$; this is defined on and nondegenerate at least on a subset $D\subset M$ (which can be specified by appropriately choosing the range of the $\left( t,r\right)$ coordinates, see \cite{Voicu:2024mag}). As \eqref{def_G_K}-\eqref{def_M} guarantee that the Berwald condition $\delta_a L = \delta_a \mathbf{A} = 0$ holds, the pseudo-Riemannian metric satisfies $a_{bc;a}=0$ for the covariant derivative defined through the affine connection coefficients $\Gamma^a{}_{bc}(x)$ and thus $\Gamma^a{}_{bc}(x)$ must define the Levi-Civita connection of $\mathbf{a}$.
\end{enumerate}

Using the newly defined quantities $\mathbf{A}$ and $\mathbf{B},$ we find that all Berwald Finsler structures of Class~3, \eqref{general_sol_Berwald_class_3}, can be expressed as:
\begin{equation}\label{eq:Lpsi}
L=\mathbf{B}^{2}\Theta \left( \dfrac{\mathbf{A}}{\mathbf{B}^{2}}\right)  = \mathbf{A}\Psi \left(s\right)\,,
\end{equation}
where 
\begin{align}
	s:=\dfrac{\mathbf{B}^{2}}{\mathbf{A}}, \quad \textrm{and} \Psi \left( s\right) :=s\Theta \left( \dfrac{1}{s}\right)\,.
\end{align}
We have thus proven:

\begin{lemma}\label{lem:alpha-beta_metric}
\begin{enumerate}
	\item All nondegenerate Class 3 solutions $L$ of the Berwald conditions are $%
	\left( \alpha ,\beta \right) $-metrics, with absolutely parallel 1-form $%
	\mathbf{b}$.
	\item The canonical affine connection of a Class 3 Berwald-Finsler function $
	L $ is the Levi-Civita connection of its pseudo-Riemannian constitutent $\mathbf{A}$.
\end{enumerate}
\end{lemma}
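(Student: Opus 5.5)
The plan is to assemble the two items of Lemma \ref{lem:alpha-beta_metric} from Lemma \ref{lem:Class_3} and the two special choices of $\Theta$ made above. For item 1, we start from the general Class 3 form \eqref{general_sol_Berwald_class_3} and define $\mathbf{B}=e^{\mathcal{G}/2}u$ and $\mathbf{A}=ve^{2\mathcal{K}}+e^{\mathcal{G}}\mathcal{M}u^{2}$ as in \eqref{B}, \eqref{A}. A direct substitution gives
\begin{equation*}
\frac{\mathbf{A}}{\mathbf{B}^{2}}=\frac{v}{u^{2}}\,e^{-(\mathcal{G}-2\mathcal{K})}+\mathcal{M}=ze^{-(\mathcal{G}-2\mathcal{K})}+\mathcal{M},
\end{equation*}
so $L=\mathbf{B}^2\Theta(\mathbf{A}/\mathbf{B}^2)$. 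Rewriting this as $\mathbf{A}\Psi(s)$ with $s=\mathbf{B}^2/\mathbf{A}$ and $\Psi(s)=s\Theta(1/s)$ is purely algebraic, valid wherever $\mathbf{A}\neq 0$, i.e.\ on the domain $D$. Since $\mathbf{A}$ is quadratic and $\mathbf{B}$ is linear in $\dot x$ with $x$-dependent coefficients, $L$ is an $(\alpha,\beta)$-metric.

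To obtain absolute parallelism, we use the Berwald condition \eqref{eq:bercond}. The key point is that $\delta_a$ depends only on the connection $\Gamma$ and not on $\Theta$. Lemma \ref{lem:Class_3} asserts that every $\Theta$ yields a solution for the same $\mathcal{G},\mathcal{K},\mathcal{M}$ solving \eqref{def_G_K}--\eqref{def_M}. In particular $\Theta\equiv1$ gives $\delta_a(\mathbf{B}^2)=0$. Hence $2\mathbf{B}\,\delta_a\mathbf{B}=0$, and since $\mathbf{B}\neq0$ on an open dense set and everything is continuous, $\delta_a\mathbf{B}=0$. Writing $\delta_a\mathbf{B}=(\partial_a b_c-\Gamma^d{}_{ac}b_d)\dot x^c=b_{c;a}\dot x^c$, the vanishing for all $\dot x$ yields $b_{c;a}=0$.

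For item 2, we take $\Theta=\mathrm{id}$, which gives $\delta_a\mathbf{A}=0$, i.e.\ $(\partial_a a_{bc}-\Gamma^d{}_{ab}a_{dc}-\Gamma^d{}_{ac}a_{bd})\dot x^b\dot x^c=0$. Polarizing in $\dot x$ gives $a_{bc;a}=0$. Since $\Gamma$ is torsion-free (symmetric) and $\mathbf{a}$ is nondegenerate on $D$, the fundamental theorem of Riemannian geometry, proved via the Koszul formula, forces $\Gamma$ to be the Levi-Civita connection of $\mathbf{a}$. Moreover, $\Gamma$ is the canonical Berwald connection of $L$ by \eqref{eq:coeffs_Berwald_spray}.

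The main point needing care is the claim that the same $\mathcal{G},\mathcal{K},\mathcal{M}$ work for every $\Theta$. This must be verified from \eqref{eq:bercond} by computing $\delta_a$ of $z e^{-(\mathcal{G}-2\mathcal{K})}+\mathcal{M}$ and of $e^{\mathcal{G}}u^2$ separately and checking that each vanishes under \eqref{def_G_K}--\eqref{def_M} together with the Class 3 conditions \eqref{eq:class_3_AF}. If Lemma \ref{lem:Class_3} is read only as giving the general form, I would carry out this short chain-rule check directly. A secondary point is nondegeneracy of $\mathbf{a}$, which is needed for the Levi-Civita uniqueness and is assumed on $D$.
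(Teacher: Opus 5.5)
Your proposal is correct and follows essentially the same route as the paper. The paper also specializes $\Theta=1$ and $\Theta=\mathrm{id}$ in Lemma \ref{lem:Class_3} to get $\delta_a\mathbf{B}^2=0$ and $\delta_a\mathbf{A}=0$, concludes $b_{c;a}=0$ and $a_{bc;a}=0$ (hence $\Gamma$ is the Levi-Civita connection of $\mathbf{a}$), and then rewrites $L=\mathbf{B}^2\Theta(\mathbf{A}/\mathbf{B}^2)=\mathbf{A}\Psi(s)$. You only make explicit the routine steps the paper leaves implicit (cancelling $\mathbf{B}$, polarization, Koszul uniqueness), and the reading of $\Theta$ as free with fixed $\mathcal{G},\mathcal{K},\mathcal{M}$ is exactly how the paper uses Lemma \ref{lem:Class_3}.
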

 This Lemma means that geodesics of the Finsler metric $L$ and those of $\mathbf{A}$ are the same as \textit{sets of points} - i.e., judging by their shapes only, we could not tell what is the correct model for our universe - the Finslerian or the Riemannian one. Yet, the proper time measurement along these worldlines is different when working with $L$, compared to $\mathbf{A}$.\\

Here are some immediate properties of Class 3 metrics, which we will need over the next sections.
\begin{lemma}
\label{lem:Ricci_b} For all Class 3 Finsler metrics:

\begin{enumerate}
\item There holds the identity%
\begin{equation}
R_{ab}\mathbf{\tilde{b}}^{b}=0,  \label{eq:Ricci_b}
\end{equation}%
where tildes denote raising indices by $a^{ab},$ that is: $\mathbf{\tilde{b}}%
^{b}=a^{bc}\mathbf{b}_{c}$.

\item The quantity $\left\langle \mathbf{b,b}\right\rangle
:=a^{ab}b_{a}b_{b} $ is a constant.

\item The curvature components $a_{7}$ and $a_{11}$ obey:%
\begin{equation}
\left( ab+c\right) a_{7}=ba_{11}.  \label{a7a11}
\end{equation}
\end{enumerate}
\end{lemma}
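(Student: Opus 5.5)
The plan is to derive all three items from two facts established above: the 1-form $\mathbf{b}$ is absolutely parallel, $b_{d;c}=0$, and the connection $\Gamma$ is the Levi-Civita connection of $\mathbf{a}$ (Lemma \ref{lem:alpha-beta_metric}). These let me use the Riemannian identities for $\mathbf{a}$ together with the Ricci identity applied to $\mathbf{b}$.

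\textbf{Item 2.} Since $a^{ab}{}_{;c}=0$ (metric compatibility of the Levi-Civita connection) and $b_{a;c}=0$, we get
\[
\partial_c\langle \mathbf{b},\mathbf{b}\rangle=\left(a^{ab}b_ab_b\right)_{;c}=0 .
\]
On the connected domain $D$ where $\mathbf{a}$ is nondegenerate, $\langle\mathbf{b},\mathbf{b}\rangle$ is therefore constant. This item is immediate.

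\textbf{Item 1.} Apply the Ricci identity to the parallel covector:
\[
0=b_{d;c;e}-b_{d;e;c}=-R^{~f}_{d~ce}\,b_f ,
\]
with the paper's convention for $R_{d~bc}^{~a}$ (the sign is irrelevant here). Contracting $d$ with $c$ gives $R^{f}{}_{e}\,b_f=0$ in some index arrangement. To reach the stated form I will use that, for a Levi-Civita connection, $R_{ab}$ is symmetric and indices can be raised and lowered with $a$. Then $R_{ab}\tilde{\mathbf b}^b=R_a{}^{f}b_f=0$. The one point to watch is the trace convention $R_{ab}=R_{a~bc}^{~c}$. I will contract so as to obtain the Ricci tensor up to sign, invoking the pair-symmetry of the Levi-Civita Riemann tensor if needed.

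\textbf{Item 3.} This is the only computational part and the main obstacle. I would write $\tilde{\mathbf b}$ explicitly. From \eqref{A},
\[
\mathbf{A}=e^{2\mathcal K}\left(c\dot r^2+2b\dot t\dot r-w^2\right)+e^{\mathcal G}\mathcal M\,u^2 ,
\]
which has a $(t,r)$ block plus the angular part. Also $\mathbf b=e^{\mathcal G/2}(dt-a\,dr)$. I would invert the $(t,r)$ block of $a_{bc}$ and compute $\tilde{\mathbf b}^t,\tilde{\mathbf b}^r$. The term proportional to $u^2$ should drop out of $\tilde{\mathbf b}$ up to a term along a fixed direction, because $u=e^{-\mathcal G/2}\mathbf B$. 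A cleaner route is to note that $\tilde{\mathbf b}$ is determined by $a(\tilde{\mathbf b},X)=\mathbf b(X)$, and to solve this with the ansatz $\tilde{\mathbf b}\propto \alpha\partial_t+\beta\partial_r$. Using $u(\partial_r+a\partial_t)=0$ should simplify things: I would expect $\tilde{\mathbf b}\propto (ab+c)\partial_t+b\,\partial_r$ up to a multiple of the vector $a\partial_t+\partial_r$ that annihilates $u$.

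Next I would express the $(t,r)$ block of $R_{ab}$ through the $a_i$ in Appendix \ref{app:defs}, using the Case I relations \eqref{eq:ABC}, which give $a_6=aa_7$, $a_8=ba_7$, and similar, and the Class 3 conditions $D=E=F=0$, which give $aa_3=a_1$, $ba_3=0$, $a_5=0$. I would then insert these into $R_{ab}\tilde{\mathbf b}^b=0$ from item 1, and take the component whose only surviving contributions come from $a_7$ and $a_{11}$. That component should reduce to $(ab+c)a_7-ba_{11}=0$. If the direct route gets messy, the fallback is to read off $a_7$ and $a_{11}$ as the mixed components $R^{~t}_{\theta~t\theta}$-type entries and use the full identity $R^{~f}_{d~ce}b_f=0$ with $d=c=\theta$ instead of its trace; this yields a linear relation between $a_7$ and $a_{11}$ with coefficients given by the components of $\mathbf b$ raised through the block.
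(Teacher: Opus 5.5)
Your arguments for items 1 and 2 are essentially the paper's: the Ricci identity for the parallel $\mathbf b$ together with the Levi-Civita symmetries, and metric compatibility.

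For item 3 you take a genuinely different and much longer route. The paper reads off from Appendix \ref{app:defs} that $a_9-a_{12}=k_{8,r}-k_{9,t}=a_5$ holds identically; this is the first Bianchi identity. It then uses $a_5=D-F=0$ in Class 3, and the metrizability relations $a_9=(ab+c)a_7$, $a_{12}=ba_{11}$ give the claim at once. You already had $a_5=0$ in hand, so this identity is all you were missing. The paper's argument is purely algebraic: it needs neither $\mathbf a$, nor Lemma \ref{lem:alpha-beta_metric}, nor any non-vanishing assumption.

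Your route does go through. You need $a_1=\dots=a_5=0$, i.e.\ $[\delta_t,\delta_r]=0$, not just the three consequences of $D=E=F=0$ you list. With that and \eqref{eq:metrizabillity_conds_a7}, one gets $R_{tt}=2ba_7$, $R_{tr}=2ba_{11}$, $R_{rt}=2(ab+c)a_7$ and $R_{rr}=2(ab+c)a_{11}$. From \eqref{eq:ainv}, exactly $\tilde{\mathbf b}=\frac{e^{\mathcal G/2}}{\Delta}\bigl((ab+c)\partial_t-b\,\partial_r\bigr)$. The two components of $R_{ab}\tilde{\mathbf b}^b$ are then $b\,X$ and $(ab+c)\,X$, up to the factor $2e^{\mathcal G/2}/\Delta$, with $X=(ab+c)a_7-ba_{11}$. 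To conclude $X=0$ you must invoke that $b$ and $ab+c$ do not both vanish. What your computation really shows is that item 3 is literally the symmetry $R_{tr}=R_{rt}$ of the Ricci tensor. The information therefore comes from the Levi-Civita property, and you could bypass $\tilde{\mathbf b}$ altogether.

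Three cautions apply to your plan.
\begin{enumerate}
\item Your guessed $\tilde{\mathbf b}\propto(ab+c)\partial_t+b\,\partial_r$ has the wrong sign on $b$ and contains no extra multiple of $a\partial_t+\partial_r$. Used as written, it would produce the false relation $(ab+c)a_7+ba_{11}=0$.
\item Contracting $\tilde{\mathbf b}$ into the first slot of $R_{ab}=R_{a~bc}^{~c}$ gives $0=0$ identically. It is therefore essential that the Levi-Civita symmetries are actually used to move the contraction to the second slot.
\item Your fallback does not work. In Class 3, once \eqref{eq:metrizabillity_conds_a7} is imposed, the uncontracted identity $R_{d~ce}^{~f}b_f=0$ holds identically. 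For $d=c=\theta$ it merely returns $a_6=aa_7$ and $a_{10}=aa_{11}$, and no component involves raised components of $\mathbf b$. It can never yield a relation between $a_7$ and $a_{11}$.
\end{enumerate}
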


\begin{proof}
The first result follows immediately from the Ricci identities $\mathbf{%
\tilde{b}}_{~;a;d}^{c}-\mathbf{\tilde{b}}_{~;d;a}^{c}=R_{a~bd}^{~c}\mathbf{%
\tilde{b}}^{a}$ applied to the vector field $\mathbf{\tilde{b}}=\mathbf{%
\tilde{b}}^{a}\partial _{a}$ and to the fact that this vector field is
absolutely parallel with respect to $\Gamma.$ The second one is also
obtained as a direct consequence of the fact that the 1-form $\mathbf{b}$ is
covariantly constant and $\Gamma$ is the Levi-Civita connection of $a$.
Finally, statement 3. follows from 
\begin{equation}
a_{5}\equiv a_{9}-a_{12}=\left( ab+c\right) a_{7}-ba_{11}=0.
\end{equation}
\end{proof}

Let us quickly summarize what we found so far. Searching for asymptotically flat, spatially spherically symmetric Berwald Finsler spacetime, among all possible classes lead us to the conclusion that the Finsler Lagrangian must be of the Class 3 type \eqref{general_sol_Berwald_class_3}. A closer look at these Finsler Lagrangians revealed that they are of $(\alpha,\beta)$-type. Even more, their canonical nonlinear connection induces an affine connection $\Gamma$ on spacetime that is the Levi-Civita connection of the metric $\mathbf{a}$, defined thorugh \eqref{A}, and the $1$-form $\mathbf{b}$ employed to define the Finsler Lagrangian \eqref{B} is absolutely parallel with respect to this connection. The question we answer next is which kind of Finsler Lagrangians $L$ of the type \eqref{general_sol_Berwald_class_3} are non-Ricci flat and solve the Finsler gravity vacuum equations?

\section{Solving the field equation for Class 3 Berwald-Finsler metrics}\label{sec:FEQC3}
Following the preparation in the previous sections, we can now solve the Finsler gravity equation for spherically symmetric Berwald Finsler Lagrangian belonging to Class 3. We find two branches, containing one, respectively two families of solutions that are displayed in \eqref{sol:L_Branch1} and \eqref{sol:Psi_Case_2}. After having found these three families of general solutions, we will discuss an explicit example in Section \ref{sec:ex} below.

Let us rewrite the Finsler gravity vacuum equation \eqref{eq:vacuum_Berwald} for Berwald structures as
\begin{equation}
g^{ab}R_{\cdot ab}=6\dfrac{R}{L},  \label{field_eq}
\end{equation}%
taking into account the expression \eqref{eq:Lpsi} of Class 3 metrics. 

The fact that our metrics are of $\left( \alpha ,\beta \right) $-type allows us to explicitly express the inverse metric $g^{ab},$ as (see \cite{alpha-beta-metrics}):
\begin{eqnarray}
g^{ab} &=&\dfrac{1}{\Psi -s\Psi ^{\prime }}{\Large (}a^{ab}+\dfrac{s^{2}}{%
\mu \mathbf{A}}\Psi ^{\prime \prime }[\Psi ^{\prime }\left( s-\left\langle 
\mathbf{b,b}\right\rangle \right) -\Psi ]\dot{x}^{a}\dot{x}^{b} \\
&&+\dfrac{2\mathbf{B}s}{\mu \mathbf{A}}\Psi \Psi ^{\prime \prime }\left( 
\mathbf{\tilde{b}}^{a}\dot{x}^{b}+\mathbf{\tilde{b}}^{b}\dot{x}^{a}\right) +%
\dfrac{1}{\mu }\left[ \Psi ^{\prime }\left( \Psi -s\Psi ^{\prime }\right)
-2s\Psi \Psi ^{\prime \prime }\right] \mathbf{\tilde{b}}^{a}\mathbf{\tilde{b}%
}^{b}{\Large )},
\end{eqnarray}%
where $\mu$ is a factor originating from the determinant of the Finsler metric:
\begin{equation}
	\det g
	=\left( \Psi -s\Psi ^{\prime }\right) ^{2}\mu \det \left( a\right)\,.
\end{equation}%
and given by
\begin{eqnarray}\label{mu}
	\mu 
	&:=&\Psi \left( \Psi -s\Psi ^{\prime }\right) +\left( \left\langle \mathbf{b,b%
	}\right\rangle -s\right) \left( \Psi \Psi ^{\prime }+2s\Psi \Psi ^{\prime
		\prime }-s\Psi ^{\prime 2}\right)\,.
\end{eqnarray}%
A direct calculation, taking into account the identity \eqref{eq:Ricci_b}, shows that the contraction on the left hand side of the Finsler gravity vacuum equation \eqref{field_eq} is
\begin{equation} 
g^{ab}R_{\cdot ab}=\dfrac{1}{\Psi -s\Psi ^{\prime }}\left( \left(
a^{ab}R_{\cdot ab}\right) +\dfrac{s^{2}}{\mu \mathbf{A}}\Psi ^{\prime \prime
}\left( \Psi ^{\prime }\left( s-\left\langle \mathbf{b,b}\right\rangle
\right) -\Psi \right) 2R\right) \,. \label{g_R}
\end{equation}
\textbf{Remark. }In relation (\ref{g_R}), the term $a^{ab}R_{\cdot ab}$ is,
up to a factor of -2, nothing but the curvature scalar of the
pseudo-Riemannian metric $a;$ therefore, we will call it in the following,
the \emph{Riemannian Ricci scalar. } \vspace{10pt}

Substituting the above identity together with $L=A\Psi $ into equation (\ref%
{field_eq}) and taking into account that we have excluded the situation when
the Finsler-Ricci scalar $R$ vanishes, a direct computation leads to:

\begin{proposition}
For Class 3 metrics with $R\not=0,$, the
Berwald vacuum equation (\ref{field_eq}) can be re-expressed as:
\begin{equation}
2\Phi =\dfrac{\mathbf{A}\left( a^{ab}R_{\cdot ab}\right) }{R},
\label{Phi_ratio_R}
\end{equation}%
where:%
\begin{equation}
\Phi :=\dfrac{1}{\Psi }\left( 3\left( \Psi -s\Psi ^{\prime }\right) -\dfrac{%
s^{2}}{\mu }\Psi \Psi ^{\prime \prime }\left( \Psi ^{\prime }\left(
s-\left\langle \mathbf{b,b}\right\rangle \right) -\Psi \right) \right)
\label{def_Phi}
\end{equation}%
and $\mu =\mu \left( s\right) $ is given by \eqref{mu}.
\end{proposition}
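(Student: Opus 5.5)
The plan is to substitute the contraction formula \eqref{g_R} and the expression $L=\mathbf{A}\Psi(s)$ from \eqref{eq:Lpsi} directly into the vacuum equation \eqref{field_eq}, and then isolate the Riemannian Ricci scalar $a^{ab}R_{\cdot ab}$. Here $R_{\cdot ab}$ denotes $\dot\partial_a\dot\partial_b R$, so $R_{\cdot ab}=-2R_{ab}$ by \eqref{eq:R_to_R_ab}. All the $\dot x$-dependence beyond the prefactors enters only through $s$, $\mathbf{A}$ and $R$. The proposition should then reduce to a purely algebraic rearrangement, under the standing assumptions $\Psi\neq 0$, $\Psi-s\Psi'\neq 0$, $\mu\neq 0$ (nondegeneracy of $g$, from $\det g=(\Psi-s\Psi')^2\mu\det a$) and $R\neq 0$.

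First I will check the ingredient \eqref{g_R}, since it carries the real content. I contract the explicit inverse metric $g^{ab}$ of the $(\alpha,\beta)$-metric with $R_{\cdot ab}$. The term with $\mathbf{\tilde b}^a\mathbf{\tilde b}^b$ and the mixed term $\mathbf{\tilde b}^a\dot x^b+\mathbf{\tilde b}^b\dot x^a$ both vanish by Lemma \ref{lem:Ricci_b}, item 1, because $R_{ab}\mathbf{\tilde b}^b=0$. For the $\dot x^a\dot x^b$ term, homogeneity gives $\dot x^a\dot x^b R_{\cdot ab}=2R$, since $R$ is 2-homogeneous in $\dot x$. This reproduces \eqref{g_R}.

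Next I insert this into $g^{ab}R_{\cdot ab}=6R/L=6R/(\mathbf{A}\Psi)$ and multiply both sides by $(\Psi-s\Psi')\mathbf{A}/R$. This gives
\[
\frac{\mathbf{A}\,a^{ab}R_{\cdot ab}}{R}+\frac{2s^2}{\mu}\Psi''\bigl(\Psi'(s-\langle\mathbf{b,b}\rangle)-\Psi\bigr)=\frac{6(\Psi-s\Psi')}{\Psi}.
\]
Moving the second term to the right and factoring $1/\Psi$ yields exactly $2\Phi$ with $\Phi$ as in \eqref{def_Phi}. In that expression the term $\frac{s^2}{\mu}\Psi\Psi''(\cdots)$ divided by $\Psi$ matches the moved term after the overall factor of 2 is accounted for. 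The converse direction follows because every operation performed is multiplication by a nonzero factor, so the two equations are equivalent.

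I expect the only real obstacle to be the careful bookkeeping of factors, namely the $2R$ coming from homogeneity and the factor 2 in front of $\Phi$. I would verify these by testing the case $\Psi\equiv 1$, the Riemannian case. There $\Phi=3$, and the equation becomes $6R=\mathbf{A}\,a^{ab}R_{\cdot ab}$, which is consistent with \eqref{field_eq} for $g^{ab}=a^{ab}$.
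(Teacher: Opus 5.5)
Your proposal is correct and follows the paper's own route. You rederive the contraction \eqref{g_R} from the $(\alpha,\beta)$ inverse metric, using $R_{ab}\tilde{\mathbf{b}}^b=0$ and $\dot x^a\dot x^b R_{\cdot ab}=2R$, then substitute it and $L=\mathbf{A}\Psi$ into \eqref{field_eq} and multiply by the nonzero factor $(\Psi-s\Psi')\mathbf{A}/R$ to regroup the result as $2\Phi$; the factor bookkeeping and the equivalence argument both check out.
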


Expressing the Finsler gravity equation as in \eqref{Phi_ratio_R} has the advantage that the $\dot x$-dependence of the right hand side is manifestly the ratio between quadratic expressions given by $\mathbf{A}$, \eqref{A}, and $R$, \eqref{eq:nlcurvberw}.

In the following, we will integrate equation (\ref{Phi_ratio_R}). For a clear discussion, we will
split our analysis into two cases, depending on whether the Riemannian Ricci scalar $a^{ab}R_{\cdot ab}$ vanishes (Section \ref{ssec:Branch2}) or not (Section \ref{ssec:Branch1}) .


\subsection{Branch 1: \texorpdfstring{$a^{ab}R_{\cdot ab}\neq0$}{R1}}\label{ssec:Branch1}

In this case we proceed as follows. We note that the left hand side of equation  only depends on $s$ and on the constant $\left\langle \mathbf{b,b}\right\rangle $, which implies the same for the right hand side. This will reveal two more constraints upon the curvature coefficients $a_{i}$, which then allow us to rewrite \eqref{Phi_ratio_R} as a second order ODE in $\Psi =\Psi (s)$ and integrate it.

\subsubsection{Computation of $R$ and of $a^{ab}R_{\cdot ab}$}\label{sssec:RandRa}
Throughout this section, we will prove
\begin{lemma}
\label{lem:R_r_Case_1} In the case $a^{ab}R_{\cdot ab}\not=0,$ for all Class
3 metrics, we have $\left\langle \mathbf{b,b}\right\rangle \not=0$ and%
\begin{equation}
R=2\alpha \left( \mathbf{A}\mathcal{-}\dfrac{\mathbf{B}^{2}}{\left\langle 
\mathbf{b,b}\right\rangle }\right) ,~\ a^{ab}R_{\cdot ab}=12\alpha ,\ 
\label{eq:R_r_Case1}
\end{equation}%
where $\alpha \in \mathbb{R}\backslash \{0\}$ is a constant.
\end{lemma}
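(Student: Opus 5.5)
Since $a^{ab}R_{\cdot ab}\neq 0$, I will first use the proposition to pin down the $\dot x$-dependence of the right-hand side of \eqref{Phi_ratio_R}. The left-hand side $2\Phi$ depends on $\dot x$ only through $s=\mathbf{B}^2/\mathbf{A}$, with $\langle\mathbf{b},\mathbf{b}\rangle$ a constant by Lemma \ref{lem:Ricci_b}. Hence at every fixed $x$ the function $\dot x\mapsto \mathbf{A}/R$ is a function of $\mathbf{B}^2/\mathbf{A}$ alone. Equivalently, $R/\mathbf{A}=f_x(s)$.

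The plan is to show that this forces $R$ to lie in the span of $\mathbf{A}$ and $\mathbf{B}^2$. Fix $x$ and consider a level set $s=s_0$, i.e.\ the quadric $\mathbf{B}^2-s_0\mathbf{A}=0$. On it, the quadratic form $R+f_x(s_0)\mathbf{A}$ vanishes, where $R=-R_{ab}\dot x^a\dot x^b$. For generic $s_0$ this quadric is irreducible in dimension 4. A quadratic form vanishing on an irreducible nondegenerate quadric cone is a multiple of its defining form, so
\begin{equation*}
-R_{ab}=\lambda(s_0)\,a_{ab}+\nu(s_0)\,b_ab_b ,
\end{equation*}
with coefficients independent of $s_0$ because the left side is. Alternatively, and more concretely, I would run the same argument by direct inspection using the explicit $SO(3)$-adapted forms of $R_{ab}$ (through the $a_i$), $a_{ab}$ and $b_a$ in the $(t,r,\dot t,\dot r,w)$ variables, comparing coefficients of monomials.

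Thus $R=\beta\mathbf{A}+\gamma\mathbf{B}^2$ with $\beta,\gamma$ functions of $x$. Next I apply the identity \eqref{eq:Ricci_b}, $R_{ab}\tilde{\mathbf b}^b=0$, which gives $\beta b_a+\gamma\langle\mathbf b,\mathbf b\rangle b_a=0$, hence $\beta=-\gamma\langle\mathbf b,\mathbf b\rangle$. If $\langle\mathbf b,\mathbf b\rangle=0$, then $\beta=0$ and $R=\gamma\mathbf{B}^2$, so $a^{ab}R_{\cdot ab}=-2a^{ab}(-\gamma b_ab_b)\cdot$const $\propto\gamma\langle\mathbf b,\mathbf b\rangle=0$, contradicting the Branch 1 hypothesis. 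Hence $\langle\mathbf b,\mathbf b\rangle\neq0$ and $R=2\alpha(\mathbf A-\mathbf B^2/\langle\mathbf b,\mathbf b\rangle)$ with $\alpha:=\beta/2$. Computing $a^{ab}\dot\partial_a\dot\partial_b R=2\alpha(2\cdot4-2)=12\alpha$ gives the stated trace. Here $a^{ab}R_{\cdot ab}$ is read as the trace of $\dot\partial\dot\partial R$, consistent with \eqref{g_R}, and $\alpha\neq0$ because of the hypothesis.

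It remains to show that $\alpha$ is constant. Here $R_{ab}=-\alpha(a_{ab}-b_ab_b/\langle\mathbf b,\mathbf b\rangle)$ is the Ricci tensor of the Levi-Civita connection of $\mathbf a$ (Lemma \ref{lem:alpha-beta_metric}), and $\mathbf b$ is parallel. I would apply the contracted Bianchi identity $\nabla^a(R_{ab}-\tfrac12 \mathcal{R}a_{ab})=0$, where $\mathcal R=a^{ab}R_{ab}=-3\alpha$. The $b_ab_b$ term is parallel, so the identity yields $-\partial_b\alpha+\tfrac32\partial_b\alpha=0$, i.e.\ $d\alpha=0$.

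I expect the main obstacle to be the algebraic step showing that $R/\mathbf A$ being a function of $s$ forces $R\in\mathrm{span}\{\mathbf A,\mathbf B^2\}$. It requires care with the signature and with degenerate or reducible level sets. If the abstract quadric argument proves delicate, I will fall back on the explicit coefficient comparison in the spherical variables, using the Class 3 relations \eqref{eq:ABC} and \eqref{a7a11} among the $a_i$.
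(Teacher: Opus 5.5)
Your proposal is correct in its overall logic, but it takes a different, coordinate-free route from the paper's. The paper justifies Step 1 in a single sentence and then works entirely with the $SO(3)$ data:
\begin{itemize}
\item it writes $R$ explicitly through $a_{11},a_{14}$ and $\mathbf A,\mathbf B,\dot r$;
\item it forces the constraint \eqref{rel_a11_a14} by a linear-independence argument;
\item it gets $\langle\mathbf b,\mathbf b\rangle\neq0$ from the identity $a_{14}=1+k_{10}^2(2ab+c)$;
\item it identifies $\alpha=a_{14}e^{-2\mathcal K}$ and proves constancy from the explicit formulas for $\partial_t a_{14}$ and $\partial_r a_{14}$.
\end{itemize}
Your version differs at each stage. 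You actually justify Step 1, through the level-set quadrics. You obtain $\beta=-\gamma\langle\mathbf b,\mathbf b\rangle$ in one line from $R_{ab}\tilde{\mathbf b}^b=0$. You derive $\langle\mathbf b,\mathbf b\rangle\neq0$ directly from the Branch 1 hypothesis via the trace. You get constancy from the contracted Bianchi identity.

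What each route buys: yours is shorter and avoids the case split $ab+c\neq0$ versus $b\neq0$. It also shows that the lemma holds for any four-dimensional Berwald $(\alpha,\beta)$-metric with parallel 1-form, independently of spherical symmetry. The paper's route instead produces the explicit curvature conditions \eqref{rel_a11_a14} and $a_{14}e^{-2\mathcal K}=\mathrm{const}$. These are exactly what is checked when building the example of Section~\ref{sec:ex}.

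\textbf{The Bianchi step, as written, is wrong.} You drop a term. Since $\alpha$ is not yet known to be constant, $\nabla^a(\alpha\, b_ab_b)=(\tilde{\mathbf b}^c\partial_c\alpha)\,b_b$, which need not vanish; parallelism of $\mathbf b$ does not remove it. Your normalization is also off by a factor of two. From $R=-R_{ab}\dot x^a\dot x^b$ one gets $R_{ab}=\lambda h_{ab}$, with $\lambda=-2\alpha$ and $h_{ab}:=a_{ab}-b_ab_b/\langle\mathbf b,\mathbf b\rangle$, and hence $a^{ab}R_{ab}=3\lambda=-6\alpha$ (not $-3\alpha$). The contracted Bianchi identity then reads
\begin{equation*}
h_{ab}\,a^{ac}\partial_c\lambda=\tfrac32\,\partial_b\lambda .
\end{equation*}
Contract with $\tilde{\mathbf b}^b$ and use $h_{ab}\tilde{\mathbf b}^b=0$; this gives $\tilde{\mathbf b}^c\partial_c\lambda=0$. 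The left side therefore reduces to $\partial_b\lambda$, and $\partial_b\lambda=\tfrac32\partial_b\lambda$ gives $d\lambda=0$. The dimension $\dim M=4$ enters here: the argument fails in dimension $3$.

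\textbf{Step 1 needs two small corrections.} First, the form vanishing on the level set is $R-f_x(s_0)\mathbf A$, not $R+f_x(s_0)\mathbf A$. Second, choose $s_0$ to be a regular value of $s$ on $\mathcal A_x$ with $s_0\notin\{0,\langle\mathbf b,\mathbf b\rangle\}$. Then $\det(b_ab_b-s_0a_{ab})\propto s_0^3(s_0-\langle\mathbf b,\mathbf b\rangle)\neq0$, so the level set is a real three-dimensional piece of an irreducible rank-4 quadric. Your divisibility argument then applies. With these repairs the proof is complete.
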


To justify the above statement, we proceed in several steps:

\begin{itemize}
	\item  \textbf{Step 1:\ }\emph{Show that there exist }$\beta =\beta \left(
	t,r\right) $ and $\gamma =\gamma \left( t,r\right) $ \emph{such that}%
	\begin{equation}
		R=\beta \mathbf{A}+\gamma \mathbf{B}^{2}:  \label{R_A_B}
	\end{equation}%
	We note that, in equation (\ref{Phi_ratio_R}), $a^{ab}R_{\cdot ab}$ only
	depends on $t$ and $r$, whereas $\Phi $ only depends on the ratio $s=\dfrac{\mathbf{B}^{2}}{\mathbf{A}}.$ It thus follows that the ratio $\dfrac{\mathbf{A}}{R} =2 \Phi (a^{ab}R_{\cdot ab})^{-1}$ must be a function of $t,r$ and $s$ only. Since, on the other hand, $R$ is quadratic in $\dot{x}^{a},$ it must be of the form \eqref{R_A_B}.
	
	\item \textbf{Step 2:\ }\emph{Show that}%
	\begin{equation}
		\left( 2ab+c\right) a_{11}+\left( ab+c\right) a_{14}=0:  \label{rel_a11_a14}
	\end{equation}%
	Let us recall that $b=\dfrac{k_{8}}{k_{10}}$ and $ab+c=\dfrac{k_{9}}{k_{10}}$
	cannot simultaneously vanish (see the definition of Class 3). To make a choice, let us assume in the
	following:
	\begin{equation}
		ab+c\not=0  \label{assumption_ab+c}
	\end{equation}%
	(the situation when $b\not=0$ can be treated completely similarly, by
	switching the roles of $a_{7}$ and $a_{11}$, see relation (\ref{a7a11})).
	
	Taking into account that $a_{1}=...=a_{5}=0$ and using (\ref%
	{eq:metrizabillity_conds_a7}), we first write $R$ as:%
	\begin{eqnarray}
		R &=&-2ba_{7}\dot{t}^{2}-4ba_{11}\dot{t}\dot{r}-2\left( ab+c\right) a_{11}%
		\dot{r}^{2}+\left( aa_{7}+a_{11}-a_{14}\right) w^{2} \\
		&&\overset{(\ref{a7a11})}{=}-2a_{11}\dfrac{\left[ b\dot{t}+\left(
			ab+c\right) \dot{r}\right] ^{2}}{c+ab}+\left( \dfrac{\left( 2ab+c\right)
			a_{11}}{ab+c}-a_{14}\right) w^{2}.
	\end{eqnarray}%
	Further, let us note that (\ref{A}) and (\ref{B}) entail: $u=e^{-\frac{%
			\mathcal{G}}{2}}\mathbf{B},~\ v=e^{-2\mathcal{K}}\left( \mathbf{A}-\mathcal{M%
	}\mathbf{B}^{2}\right) .$ This allows us, using the definitions \eqref{def:uv} of $u$
	and $v,$ to express $\dot{t}$ and $w$ in terms of $\mathbf{A}$, $\mathbf{B}$
	and $\dot{r}$:%
	\begin{eqnarray}
		\dot{t} &=&u+a\dot{r}=e^{-\frac{\mathcal{G}}{2}}\mathbf{B}+a\dot{r}
		\label{eq:dot_t_B} \\
		w^{2} &=&\left( c+2ab\right) \dot{r}^{2}+2bu\dot{r}-v=\left( c+2ab\right) 
		\dot{r}^{2}+2be^{-\frac{\mathcal{G}}{2}}\mathbf{B}\dot{r}-e^{-2\mathcal{K}%
		}\left( \mathbf{A}-\mathcal{M}\mathbf{B}^{2}\right) .  \label{eq_w_A_B}
	\end{eqnarray}%
	Direct substitution of these values into the expression of $R$ then recasts
	it as: 
	\begin{eqnarray}
		R &=&-\dfrac{\left( 2ab+c\right) a_{11}+\left( ab+c\right) a_{14}}{ab+c}%
		\left[ \left( \left( 2ab+c\right) \dot{r}+2be^{-\frac{\mathcal{G}}{2}}%
		\mathbf{B}\right) \dot{r}\right]  \label{eq:R1} \\
		&&-e^{-2\mathcal{K}}\dfrac{\left( 2ab+c\right) a_{11}-\left( ab+c\right)
			a_{14}}{ab+c}\mathbf{A}  \label{eq:R2} \\
		&&+\dfrac{1}{ab+c}\left[ \left( \mathcal{M}e^{-2\mathcal{K}}\left(
		2ab+c\right) -2e^{-\mathcal{G}}b^{2}\right) a_{11}-\mathcal{M}e^{-2\mathcal{K%
		}}\left( ab+c\right) a_{14}\right] \mathbf{B}^{2}.  \label{eq:R3}
	\end{eqnarray}
	
	Assume, in the following, that $\left( 2ab+c\right) a_{11}+\left(
	ab+c\right) a_{14}\not=0.$
	
	The last two terms in $R$ are already a linear combination of $\mathbf{A}$
	and $\mathbf{B}^{2}$ only (with $\left( t,r\right) $-dependent
	coefficients), whereas the square bracket in the first one is, taking into
	account~\eqref{eq:dot_t_B},
	\begin{equation}
		\left( \left( 2ab+c\right) \dot{r}+2be^{-\frac{\mathcal{G}}{2}}\mathbf{B}%
		\right) \dot{r}=\left( 2b\dot{t}+c\dot{r}\right) \dot{r}.
	\end{equation}%
	In order to obtain $R$ as a linear combination of $\mathbf{A}$ and $\mathbf{B%
	}^{2}$ as in (\ref{R_A_B}), we must then have%
	\begin{equation}
		\left( 2b\dot{t}+c\dot{r}\right) \dot{r}=\sigma \mathbf{A}+\tau \mathbf{B}%
		^{2}
	\end{equation}%
	for some $\sigma =\sigma \left( t,r\right) $ and $\tau =\tau \left(
	t,r\right) .$ In the last relation, the left hand side does not involve $w,$
	whereas, using the expression \eqref{eq:A_trw} of $\mathbf{A,}$ together
	with $\mathbf{B}=e^{\frac{\mathcal{G}}{2}}\left( \dot{t}-a\dot{r}\right) ,$
	it turns out that the right hand side contains the term $-\sigma e^{2%
		\mathcal{K}}w^{2}.$ This implies $\sigma =0,$ that is,%
	\begin{equation}
		\left( 2b\dot{t}+c\dot{r}\right) \dot{r}=\tau \mathbf{B}^{2}.
	\end{equation}%
	But then, $\mathbf{B}=e^{\frac{\mathcal{G}}{2}}\left( \dot{t}-a\dot{r}%
	\right) $ must be a multiple (with $\left( t,r\right) $-dependent
	coefficient) of $\dot{r}$ -- and hence $\dot{t}=\rho \dot{r}$ for some $\rho
	=\rho \left( t,r\right),$ which is not possible (as $\dot{t}$ and $\dot{r}$ are independent coordinates). We thus
	conclude that our assumption was false, therefore (\ref{rel_a11_a14}) must
	hold true.
	
	\item \textbf{Step 3: }\emph{Prove that }$\left\langle \mathbf{b,b}\right\rangle
	\not=0$ and%
	\begin{equation}
		R=2a_{14}e^{-2\mathcal{K}}\left( \mathbf{A}\mathcal{-}\dfrac{\mathbf{B}^{2}}{%
			\left\langle \mathbf{b,b}\right\rangle }\right) . \label{R_A_B_refined}
	\end{equation}
	
	First, using the expressions of $\left( a^{ab}\right) $ from \eqref{eq:ainv} and $\mathbf{b}_{a}\equiv \left( e^{\frac{\mathcal{G}}{2}},-ae^{\frac{\mathcal{G}}{2}},0,0\right) ,$ we find: 
	\begin{equation}\label{eq:bb}
		\left\langle \mathbf{b},\mathbf{b}\right\rangle =a^{ab}\mathbf{b}_{a}\mathbf{%
			b}_{b}=\dfrac{e^{\mathcal{G}}\left( 2ab+c\right) }{\mathcal{M}e^{\mathcal{G}%
			}\left( 2ab+c\right) -b^{2}e^{2\mathcal{K}}}.
	\end{equation}%
	Assuming that $\left\langle \mathbf{b},\mathbf{b}\right\rangle =0,$ we would
	obtain $2ab+c=0;$ but, the latter is impossible, as it implies on the one
	hand, by (\ref{rel_a11_a14}) and (\ref{assumption_ab+c}), that $a_{14}=0\ $, and, on the other hand, $a_{14}=1+k_{10}^{2}\left( 2ab+c\right) =1,$
	which is a contradiction. Therefore, we must have $\left\langle \mathbf{b},%
	\mathbf{b}\right\rangle \not=0,$ as stated.
	
	As a byproduct, we also find that $2ab+c\not=0,$ which allows us to write
	
	\begin{equation}
		a_{11}=-\dfrac{ab+c}{2ab+c}a_{14}.
	\end{equation}%
	Substituting the latter relation into (\ref{eq:R1})-(\ref{eq:R3}) we get by
	direct calculation%
	\begin{equation}
		R=2a_{14}e^{-2\mathcal{K}}\mathbf{A}-2a_{14}\dfrac{-b^{2}e^{-\mathcal{G}}+%
			\mathcal{M}e^{-2\mathcal{K}}\left( 2ab+c\right) }{2ab+c}\mathbf{B}^{2}.
	\end{equation}%
	which, using the expression \eqref{eq:bb} of  $\left\langle \mathbf{b,b}%
	\right\rangle ,$ is precisely (\ref{R_A_B_refined}).
	
	\item \textbf{Step 4: }\emph{Show that }$a_{14}e^{-2\mathcal{K}}$ \emph{is a
		constant:}
	
	Using \eqref{eqs:a14} and taking into account relations (\ref%
	{a7a11}) and (\ref{rel_a11_a14}), we find:%
	\begin{equation}
		\left\{ 
		\begin{array}{l}
			\partial _{r}a_{14}=2k_{10}\left( ab+c\right) a_{14}=2k_{9}a_{14} \\ 
			\partial _{t}a_{14}=2k_{10}ba_{14}=2k_{8}a_{14}%
		\end{array}%
		\right. .
	\end{equation}%
	The statement follows then by taking $t$- and $r$- derivatives of $%
	a_{14}e^{-2\mathcal{K}}$ and using~$k_{8}=\mathcal{K}_{,t}$ and $k_{9}=\mathcal{K}_{,r}$.
	
	\item \textbf{Step 5: }\emph{Use the above results Step 1-4, to prove (\ref{eq:R_r_Case1}):}
	
	Denoting $\alpha :=a_{14}e^{-2\mathcal{K}},$ relation (\ref{R_A_B_refined})
	becomes precisely the first equality \emph{(\ref{eq:R_r_Case1})}; moreover,
	since $R\not=0$ by hypothesis, we must have $\alpha \not=0.$
	
	Then, differentiation with respect to $\dot{x}^{a}$ and $\dot{x}^{b}$ of
	this equation gives $R_{\cdot ab}=4\alpha \left( a_{ab}-\dfrac{\mathbf{b}_{a}%
		\mathbf{b}_{b}}{\left\langle \mathbf{b,b}\right\rangle }\right) $ and,
	accordingly (recalling that $\dim M=4$), $a^{ab}R_{\cdot ab}=12\alpha ,$ as
	stated.
\end{itemize}

\textbf{Remark:}\ Lemma \ref{lem:R_r_Case_1} tells us, in particular, that
any vacuum solution belonging to this class has its underlying Riemannian
metric $\mathbf{A}$ with \emph{constant} \emph{nonzero} \emph{scalar
	curvature} - and \emph{absolutely parallel, non-null} 1-form $\mathbf{B}.$
	
\subsubsection{Integration of  the field equation}

Using relations (\ref{eq:R_r_Case1}), the field equation (\ref{Phi_ratio_R})
can be rewritten as: $\Phi =\dfrac{3\mathbf{A}\left\langle \mathbf{b},%
\mathbf{b}\right\rangle }{\mathbf{A}\left\langle \mathbf{b},\mathbf{b}%
\right\rangle -\mathbf{B}^{2}},$ which is simply:%
\begin{equation}
\Phi =\dfrac{3\left\langle \mathbf{b},\mathbf{b}\right\rangle }{\left\langle 
\mathbf{b},\mathbf{b}\right\rangle -s}.  \label{eq:Phi_Case_1_simplified}
\end{equation}
Substituting $\Phi $ from (\ref{def_Phi}), a direct computation shows
that this is equivalent (after discarding a common factor of $s$), to:%
\begin{eqnarray}
&&s\left[ \Psi +\left( \left\langle \mathbf{b},\mathbf{b}\right\rangle
-s\right) \Psi ^{\prime }\right] \times  \label{eq:Case_1_product} \\
&&\left( 3s^{2}(\Psi ^{\prime })^{2}+3\Psi ^{2}+3\left\langle \mathbf{b},%
\mathbf{b}\right\rangle \Psi \Psi ^{\prime }-6s\Psi \Psi ^{\prime
}-3\left\langle \mathbf{b},\mathbf{b}\right\rangle s(\Psi ^{\prime
})^{2}-5s^{2}\Psi \Psi ^{\prime \prime }+5\left\langle \mathbf{b},\mathbf{b}%
\right\rangle s\Psi \Psi ^{\prime \prime }\right) =0.  \notag
\end{eqnarray}

Let us analyze each factor.

\begin{itemize}
\item The situation $\Psi +\left( \left\langle \mathbf{b},\mathbf{b}%
\right\rangle -s\right) \Psi ^{\prime }=0$ leads to $\Psi =c_{1}\left(
\left\langle \mathbf{b},\mathbf{b}\right\rangle -s\right) ,$ $c_{1}=const.,$
hence to a quadratic function:%
\begin{equation}
L=A\Psi =c_{1}\left( \left\langle \mathbf{b},\mathbf{b}\right\rangle
A-B^{2}\right) .
\end{equation}%
Yet, this is not a valid solution. A quick check shows that its associated
Finslerian metric, with components:%
\begin{equation}
g_{ab}=c_{1}\left( \left\langle \mathbf{b},\mathbf{b}\right\rangle
a_{ab}-b_{a}b_{b}\right) ,
\end{equation}%
has an identically vanishing determinant.

\item It thus remains to integrate the remaining equation:%
\begin{equation}
\mathbf{\ }3s^{2}(\Psi ^{\prime })^{2}+3\Psi ^{2}+3\left\langle \mathbf{b},%
\mathbf{b}\right\rangle \Psi \Psi ^{\prime }-6s\Psi \Psi ^{\prime
}-3\left\langle \mathbf{b},\mathbf{b}\right\rangle s(\Psi ^{\prime
})^{2}-5s^{2}\Psi \Psi ^{\prime \prime }+5\left\langle \mathbf{b},\mathbf{b}%
\right\rangle s\Psi \Psi ^{\prime \prime }=0  \label{eq:Case_1_valid}
\end{equation}%
This can be recast in the form: $\left( \Psi ^{-3}\left( \Psi -s\Psi
^{\prime }\right) ^{5}\left( \left\langle \mathbf{b},\mathbf{b}\right\rangle
-s\right) ^{3}\right) ^{\prime }=0,$ which leads to a first integral%
\begin{equation}
\Psi ^{-3}\left( \Psi -s\Psi ^{\prime }\right) ^{5}\left( \left\langle 
\mathbf{b},\mathbf{b}\right\rangle -s\right) ^{3}=\tilde{c}_{1},
\label{first_int_Case_1}
\end{equation}%
where $\tilde{c}_{1}\in \mathbb{R}$ is a constant. Further, the change of
the dependent variable 
\begin{equation}
f:=\dfrac{\Psi }{s},
\end{equation}%
casts equation (\ref{first_int_Case_1}) into the form%
\begin{equation}
f^{\prime }=\tilde{c}_{1}^{\tfrac{1}{5}}s^{-\tfrac{7}{5}}\left(
s-\left\langle \mathbf{b},\mathbf{b}\right\rangle \right) ^{-\tfrac{3}{5}}f^{%
\tfrac{3}{5}}.
\end{equation}%
This has the general solution%
\begin{equation}
f=c_{1}\left[ \left( s-\left\langle \mathbf{b},\mathbf{b}\right\rangle
\right) ^{\tfrac{2}{5}}s^{-\tfrac{2}{5}}+c_{2}\right] ^{\tfrac{5}{2}},
\end{equation}%
where $c_{1}=\left( \dfrac{25}{4\left\langle \mathbf{b},\mathbf{b}%
\right\rangle }\tilde{c}_{1}^{\tfrac{1}{5}}\right) ^{\tfrac{5}{2}}>0$ and $%
c_{2}\in \mathbb{R}$ are arbitrary constants.
\end{itemize}

\bigskip

Reverting to the original variable $\Psi =sf$ and introducing the factor $%
s>0 $ into the square bracket, we find:

\begin{proposition}
The general solution for Branch 1 is:%
\begin{equation}
\Psi =c_{1}\left[ \left( s-\left\langle \mathbf{b},\mathbf{b}\right\rangle
\right) ^{\tfrac{2}{5}}+c_{2}s^{\tfrac{2}{5}}\right] ^{\tfrac{5}{2}},
\label{gen_sol_Branch_1}
\end{equation}%
where $c_{1}>0$ and $c_{2}\in \mathbb{R}$ are arbitrary constants.
Equivalently:%
\begin{equation}
L=c_{1}\epsilon \left( \left( \left\langle \mathbf{b},\mathbf{b}%
\right\rangle \mathbf{A}-\mathbf{B}^{2}\right) ^{\tfrac{2}{5}}+c_{2}\mathbf{B%
}^{\tfrac{4}{5}}\right) ^{\tfrac{5}{2}},  \label{sol:L_Branch1}
\end{equation}%
where $\epsilon =sign(\mathbf{A}).$
\end{proposition}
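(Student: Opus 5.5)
The proposition follows by undoing the substitution $f=\Psi/s$ in the solution just obtained, checking the constant ranges, and then converting to $L=\mathbf{A}\Psi(s)$ with $s=\mathbf{B}^2/\mathbf{A}$. I also need to confirm that no solutions are lost along the way. The discussion above already showed that the vacuum equation \eqref{Phi_ratio_R} in Branch 1 reduces, by Lemma \ref{lem:R_r_Case_1}, to \eqref{eq:Phi_Case_1_simplified}, and hence to the product \eqref{eq:Case_1_product}. The first nontrivial factor only gives the degenerate quadratic Lagrangian. So every admissible $\Psi$ must solve \eqref{eq:Case_1_valid}.

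\textbf{Step 1: the exact derivative.} I would first verify that \eqref{eq:Case_1_valid} is an exact derivative. Expanding
\[
\frac{d}{ds}\Bigl(\Psi^{-3}(\Psi-s\Psi')^{5}(\langle\mathbf b,\mathbf b\rangle-s)^{3}\Bigr)
\]
and using $(\Psi-s\Psi')'=-s\Psi''$ gives
\[
\Psi^{-4}(\Psi-s\Psi')^{4}(\langle\mathbf b,\mathbf b\rangle-s)^{2}
\]
times a factor that should coincide, up to sign, with the left-hand side of \eqref{eq:Case_1_valid}. The prefactor vanishes only where $\Psi-s\Psi'=0$. That case must be excluded anyway, since then $\det g=(\Psi-s\Psi')^2\mu\det a=0$. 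Hence \eqref{eq:Case_1_valid} is equivalent to the first integral \eqref{first_int_Case_1}. The case $\tilde c_1=0$ forces $\Psi-s\Psi'=0$ and is excluded, so $\tilde c_1\neq0$.

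\textbf{Step 2: integrating for $f$.} With $f=\Psi/s$ one has $\Psi-s\Psi'=-s^2f'$. Equation \eqref{first_int_Case_1} then becomes
\[
-s^{10}f'^5=\tilde c_1\,s^3f^3(\langle\mathbf b,\mathbf b\rangle-s)^{-3}.
\]
Taking fifth roots, which is unambiguous over the reals, gives a separable equation. I would integrate $f^{-3/5}df$ to get $\tfrac52 f^{2/5}$. On the other side I would use the primitive
\[
\int s^{-7/5}(s-\langle\mathbf b,\mathbf b\rangle)^{-3/5}\,ds=\frac{5}{2\langle\mathbf b,\mathbf b\rangle}\Bigl(\frac{s-\langle\mathbf b,\mathbf b\rangle}{s}\Bigr)^{2/5},
\]
which I would check by differentiating. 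This is where $\langle\mathbf b,\mathbf b\rangle\neq0$ from Lemma \ref{lem:R_r_Case_1} is essential. The result is
\[
f^{2/5}=\text{const}\cdot\bigl((s-\langle\mathbf b,\mathbf b\rangle)^{2/5}s^{-2/5}+c_2\bigr).
\]
Raising to the power $5/2$ gives the stated $f$, with $c_1>0$ absorbing the constant $\tfrac{25}{4\langle\mathbf b,\mathbf b\rangle}\tilde c_1^{1/5}$.

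\textbf{Step 3: back to $\Psi$ and $L$.} Multiplying by $s=(s^{2/5})^{5/2}$ for $s>0$ gives \eqref{gen_sol_Branch_1}. For $L$ I would write
\[
L=\mathbf A\Psi=\mathbf A\,c_1\bigl[(\mathbf B^2-\langle\mathbf b,\mathbf b\rangle\mathbf A)^{2/5}+c_2\mathbf B^{4/5}\bigr]^{5/2}|\mathbf A|^{-1}.
\]
This factors $|\mathbf A|^{2/5}$ out of each term, so $\mathbf A/|\mathbf A|=\epsilon$. Finally, since $(-x)^{2/5}=x^{2/5}$ for the real fifth root, the term $(\mathbf B^2-\langle\mathbf b,\mathbf b\rangle\mathbf A)^{2/5}$ equals $(\langle\mathbf b,\mathbf b\rangle\mathbf A-\mathbf B^2)^{2/5}$. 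That yields \eqref{sol:L_Branch1}.

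\textbf{Main obstacle.} The algebraic heart is Step 1, the identification of the integrating structure. It is a finite but error-prone expansion, and I would carry it out by computing the logarithmic derivative
\[
-3\frac{\Psi'}{\Psi}-\frac{5s\Psi''}{\Psi-s\Psi'}-\frac{3}{\langle\mathbf b,\mathbf b\rangle-s}
\]
and clearing denominators. The remaining points need care rather than computation: the branch and sign choices for real fractional powers, and the range of $s$ (in particular $s>0$).
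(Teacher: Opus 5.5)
Your route is the paper's: keep the second factor of \eqref{eq:Case_1_product}, recognize it as an exact derivative, substitute $f=\Psi/s$, separate variables and multiply back by $s$. Your computations check out. The derivative of $\Psi^{-3}(\Psi-s\Psi')^5(\langle\mathbf b,\mathbf b\rangle-s)^3$ is exactly $-\Psi^{-4}(\Psi-s\Psi')^4(\langle\mathbf b,\mathbf b\rangle-s)^2$ times the left-hand side of \eqref{eq:Case_1_valid}, and your primitive is correct. Excluding $\tilde c_1=0$ via $\det g=0$ is a point the paper leaves implicit.

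The step that fails as written, in your proposal and in the paper alike, is ``raise to the power $5/2$, with $c_1>0$ absorbing the constant''. Your integration gives $f^{2/5}=k\,(h+c_2)$ with $h=\bigl((s-\langle\mathbf b,\mathbf b\rangle)/s\bigr)^{2/5}$ and $k=\tilde c_1^{1/5}/\langle\mathbf b,\mathbf b\rangle$. This is not the $\tfrac{25}{4}$-multiple you quote; that factor is a harmless slip.

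Since $f^{2/5}=|f|^{2/5}$, this relation fixes only $|f|$. The sign of $k$, i.e.\ of $\tilde c_1\langle\mathbf b,\mathbf b\rangle$, is unconstrained. For $k<0$ you get $|f|^{2/5}=|k|\,(c_2'-h)$ with $c_2'=-c_2$, i.e.\ $\Psi=\pm|k|^{5/2}\bigl(c_2's^{2/5}-(s-\langle\mathbf b,\mathbf b\rangle)^{2/5}\bigr)^{5/2}$ for $s>0$.

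These really are solutions. Take $\Psi=C X^{5/2}$ with $X=\sigma(s-\langle\mathbf b,\mathbf b\rangle)^{2/5}+c_2s^{2/5}$. Then $\Psi-s\Psi'=-\sigma C\langle\mathbf b,\mathbf b\rangle X^{3/2}(s-\langle\mathbf b,\mathbf b\rangle)^{-3/5}$, so the first integral equals $\sigma C^2\langle\mathbf b,\mathbf b\rangle^5$ for both $\sigma=\pm1$. By linear independence of $s^{2/5}$ and $(s-\langle\mathbf b,\mathbf b\rangle)^{2/5}$, the $\sigma=-1$ solutions are not of the form \eqref{gen_sol_Branch_1} with real $c_1>0$. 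In addition, \eqref{eq:Case_1_valid} is homogeneous of degree two in $\Psi$, so $c_1<0$ always occurs as well.

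The claim therefore holds only with $c_1\in\mathbb R\setminus\{0\}$ and $x^{5/2}$ read as $\operatorname{sgn}(x)|x|^{5/2}$, or else with the $\sigma=-1$ branch listed separately. This is not cosmetic: in the example of Section~\ref{sec:ex} ($c_2=-1$), the bracket is negative on the timelike cone. You also use this sign freedom tacitly when passing from $s>0$, where $\epsilon=1$, to the formula with $\epsilon=\operatorname{sign}\mathbf A$. For $s<0$, $\Psi=sf$ has the stated form only with $-c_1$ in place of $c_1$.
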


\textbf{Remark: }In order to ensure the existence of nondegenerate light
cones for $L,$ we must have:%
\begin{equation}
c_{2}<0.
\end{equation}

With this we found the first two-parameter family of spatially spherically symmetric asymptotically flat Berwald type solutions of the Finsler gravity vacuum equation, that are not Finsler Ricci flat.

\subsection{Branch 2:\ $a^{ab}R_{\cdot ab}=0$}\label{ssec:Branch2}
This case will give rise to two further 2-parameter families of spatially spherically symmetric, asymptotically flat Berwald-type solutions of the Finsler gravity vacuum equations, that are not Ricci flat.

Technically, this case is much simpler, as it allows us to directly rewrite equation
(\ref{Phi_ratio_R}) as%
\begin{equation}
\Phi =0.  \label{eq_Case_2}
\end{equation}%
and integrate it. Indeed, using (\ref{def_Phi}), this is:%
\begin{equation}
3\left( \Psi -s\Psi ^{\prime }\right) -\dfrac{s^{2}}{\mu }\Psi \Psi ^{\prime
\prime }\left( \Psi ^{\prime }\left( s-\left\langle \mathbf{b,b}%
\right\rangle \right) -\Psi \right) =0,
\end{equation}%
where we recall that $\left\langle \mathbf{b,b}\right\rangle =const.$ and%
\begin{equation}
\mu =\Psi \left( \Psi -s\Psi ^{\prime }\right) +\left( \left\langle \mathbf{%
b,b}\right\rangle -s\right) \left( \Psi \Psi ^{\prime }+2s\Psi \Psi ^{\prime
\prime }-s\Psi ^{\prime 2}\right) .
\end{equation}

Integration of the above via Maple leads to a general solution $L _{\pm } = \mathbf{A} \Psi _{\pm }(s),\, s = \frac{\mathbf{B}^2}{\mathbf{A}},$ with two
branches corresponding to the plus or minus sign in front of the square root below:%
\begin{equation}
\Psi _{\pm }=K_{2}\left( s-\left\langle \mathbf{b},\mathbf{b}\right\rangle
\right) \exp \left( -6\left\langle \mathbf{b},\mathbf{b}\right\rangle
^{2}\int \left( \dfrac{1}{\left( s-\left\langle \mathbf{b},\mathbf{b}%
\right\rangle \right) \left( 5\left\langle \mathbf{b},\mathbf{b}%
\right\rangle s\pm \sqrt{\left\langle \mathbf{b},\mathbf{b}\right\rangle
^{3}s+K_{1}s\left( s-\left\langle \mathbf{b},\mathbf{b}\right\rangle \right) 
}\right) }\right) ds\right) ,  \label{sol:Psi_Case_2}
\end{equation}%
where $K_{1},K_{2}\in \mathbb{R}$ are integration constants. The integral above can be explicitly calculated for any value of $\left\langle \mathbf{b},\mathbf{b}\right\rangle$.\\

\textbf{Remark. }The choice $\left\langle \mathbf{b},\mathbf{b}\right\rangle
=0$ leads to $\Psi _{\pm }=K_{2}s$, i.e., to $L=K_{2}B^{2}$ which is
degenerate. Hence, we can assume with no loss of generality that%
\begin{equation}
\left\langle \mathbf{b},\mathbf{b}\right\rangle \not=0.
\end{equation}

\section{Concrete example}\label{sec:ex}
In the previous section, we have found three 2-parameter families of spatially spherically symmetric, asymptotically flat Berwald Finsler Lagragians which solve the Finsler gravity vacuum equations and are not Ricci flat. We displayed their general form, and in particular their explicit $\dot x$ dependence. Yet, we did not explicitly solve for their dependence on the base manifold coordinates $(x^a)=(t,r,\theta,\phi)$; the latter can be determined by integrating equations \eqref{def_G_K}-\eqref{def_M} with \eqref{A} and \eqref{B}.\\

To demonstrate that there exist interesting solutions, we construct a concrete example of Berwald-Finsler spacetime function satisfying the above properties and belonging to Branch 1. We find this example by specifying a torsion-free, $SO(3)$-symmetric connection $\Gamma$ that is the Levi-Civita connection of an $SO(3)$-symmetric pseudo-Riemannian metric and at the same time satisfies the Finsler metrizabillity conditions \eqref{eq:ABC}. Its non-vanishing connection coefficients are given by:
\begin{eqnarray}
&&k_{5}=\Gamma _{~rr}^{r}=-\dfrac{K-8r^{2}}{r\left( K-4r^{2}\right) }%
;~~k_{9}=\Gamma _{~r\theta }^{\theta }=\Gamma _{~r\varphi }^{\varphi }=-%
\dfrac{1}{r}\ \ \ \ k_{10}=\Gamma _{~\theta \theta }^{r}=\dfrac{\Gamma
_{~\varphi \varphi }^{r}}{\sin ^{2}\theta }=-\dfrac{1}{4}\dfrac{K-4r^{2}}{r},
\\
&&\Gamma _{~\varphi \varphi }^{\theta }=-\sin \theta \cos \theta ,~\ \ \
\Gamma _{~\theta \varphi }^{\varphi }=\cot \theta ,
\end{eqnarray}%
where $K>0$ is an arbitrary constant. 

A direct computation shows that its only nonzero curvature components $a_{i}$ are
\begin{equation}
a_{11}=-\dfrac{1}{4}\dfrac{K}{r^{2}},~\ a_{13}=-\dfrac{K}{r^{2}\left(
K-4r^{2}\right) },~\ \ a_{14}=\dfrac{1}{4}\dfrac{K}{r^{2}}\,,
\end{equation}
and that this connection is obviously asymptotically flat, as the curvature components all vanish when $r\to\infty$. Moreover, 
\begin{equation}
a=b=0,~\ c=\dfrac{4}{K-4r^{2}}.
\end{equation}
Using these values, we calculate via (\ref{def_G_K})-(\ref{def_M}), the "potentials", $\mathcal{G},$ $\mathcal{K}$ and $\mathcal{M}:$ 
\begin{equation}
\mathcal{G}=2\ln \alpha _{1},~\ \mathcal{K}=\alpha _{2}-\ln r,~\mathcal{M}%
=\alpha _{3}.
\end{equation}%
where $\alpha _{1}>0,\alpha _{3}\not=0$ and $\alpha _{2}\in \mathbb{R}$ are
arbitrary constants.

This connection obviously satisfies all the Finsler metrizability conditions \eqref{eq:ABC}, the Class 3 conditions
\begin{equation}
A=B=C=D=E=F=0
\end{equation}%
and the specific Branch 1 conditions (\ref{rel_a11_a14}) and $a_{14}e^{-2%
\mathcal{K}}=\dfrac{1}{4}Ke^{-2\alpha _{2}}=const.$

Using (\ref{def:uv}) and (\ref{A}), we find the pseudo-Riemannian part $\mathbf{A}$ of $L$ as%
\begin{align}
\mathbf{A}&=\left( \alpha _{3}\alpha _{1}^{2}\right) \dot{t}^{2}-\dfrac{%
4e^{2\alpha _{2}}}{\left( 4r^{2}-K\right) r^{2}}\dot{r}^{2}-\dfrac{%
e^{2\alpha _{2}}}{r^{2}}w^{2}\\
 \Leftrightarrow 
\left( a_{ab}\right)
&=diag\left( \alpha _{3}\alpha _{1}^{2},-\dfrac{4e^{2\alpha _{2}}}{\left(
4r^{2}-K\right) r^{2}},-\dfrac{e^{2\alpha _{2}}}{r^{2}},-\dfrac{e^{2\alpha
_{2}}}{r^{2}}\sin ^{2}\theta \right) .
\end{align}
We note that $\left( a_{ab}\right) $ is in diagonal form, meaning that its
inverse is simply 
\begin{equation}
\left( a^{ab}\right) =diag\left( \dfrac{1}{\alpha _{3}\alpha _{1}^{2}},-%
\dfrac{\left( 4r^{2}-K\right) r^{2}e^{-2\alpha _{2}}}{4},-r^{2}e^{-2\alpha
_{2}},-r^{2}e^{-2\alpha _{2}}\sin ^{2}\theta \right) .
\end{equation}

Similarly, we get 
\begin{equation}
\left( \mathbf{b}_{a}\right) \mathbf{:}\mathbf{=}\left( e^{\frac{\mathcal{G}%
}{2}},-ae^{\frac{\mathcal{G}}{2}},0,0\right) =\left( \alpha
_{1},0,0,0\right) \Leftrightarrow ~\ \mathbf{B}=\alpha _{1}\dot{t}
\end{equation}%
and accordingly%
\begin{equation}
\left\langle \mathbf{b,b}\right\rangle =a^{ab}b_{a}b_{b}=\dfrac{1}{\alpha
_{3}}.
\end{equation}

A direct check shows that $\Gamma$ is indeed, the Levi-Civita connection of 
$\mathbf{A}$ and its curvature scalar is a nonzero constant:%
\begin{equation}
a^{ab}R_{ab}=-\dfrac{1}{2}a^{ab}R_{\cdot ab}=-\dfrac{3}{2}K\not=0.
\end{equation}

Finally, we get%
\begin{equation}
\mathbf{B}^{2}-\left\langle \mathbf{b,b}\right\rangle \mathbf{A}=\dfrac{%
e^{2\alpha _{2}}}{\alpha _{3}r^{2}\left( K-4r^{2}\right) }\left[ 4\dot{r}%
^{2}+\left( 4r^{2}-K\right) (\dot{\theta}^{2}+\dot{\varphi}^{2}\sin
^{2}\theta )\right] ;  \label{eq:spatial_part_example}
\end{equation}%
thus, setting for convenience, $c_{2}:=-1$ in (\ref{sol:L_Branch1}), we
obtain the pseudo-Finsler function%
\begin{equation}\label{eq:ex}
L=\mathbf{A}\Psi =c_{1}\epsilon \left\{ -\left( \alpha _{1}\right) ^{\tfrac{4%
}{5}}\dot{t}^{\tfrac{4}{5}}+p_{2}\left( r\right) \left[ 4\dot{r}^{2}+\left(
4r^{2}-K\right) (\dot{\theta}^{2}+\dot{\varphi}^{2}\sin ^{2}\theta )\right]
^{\tfrac{2}{5}}\right\} ^{\tfrac{5}{2}}
\end{equation}%
where%
\begin{equation}
p_{2}\left( r\right) :=\left( \dfrac{e^{2\alpha _{2}}}{\alpha
_{3}r^{2}\left( 4r^{2}-K\right) }\right) ^{\tfrac{2}{5}}>0.
\end{equation}%
We note that, for all points $x\in M$ obeying 
\begin{equation}
r^{2}>\dfrac{K}{4}.
\end{equation}%
the spatial part (\ref{eq:spatial_part_example}) of $L$ is positive definite and $L$
has well defined light cones $L=0$ at all these points. These are actually,
quadratic cones:
\begin{equation}
L=0\Leftrightarrow ~\ \ \ \alpha _{1}\dot{t}^{2}=\left( p_{2}\right) ^{%
\tfrac{5}{2}}\left[ 4\dot{r}^{2}+\left( 4r^{2}-K\right) (\dot{\theta}^{2}+%
\dot{\varphi}^{2}\sin ^{2}\theta )\right] .
\end{equation}
which turn out to coincide with those of $A$.

\section{Conclusion and outlook}\label{sec:CO}

We have found all families of spatially spherically symmetric ($SO\left(3\right) $-invariant), asymptotically flat Berwald Finsler structures that are non Ricci flat and solve the Finsler gravity equation \eqref{eq:field_eqn} in vacuum. There are three types of solutions, called $L_1$ and $L_\pm$, as displayed in \eqref{sol:L_Branch1} and \eqref{sol:Psi_Case_2}. Their derivation was discussed in detail in Section \ref{sec:FEQC3}.

A remarkable property is that all of these solutions belong to the class of $(\alpha,\beta)$-Finsler structures, that are built from a pseudo-Riemannian metric $\mathbf{a}$, \eqref{A}, and a $1$-form $\mathbf{b}$, \eqref{B}, which is absolutely parallel with respect to the Levi-Civita connection of $\mathbf{a}$. Moreover, for $L_1$, the scalar curvature of $\mathbf{a}$ is a non-zero constant, while for $L_\pm$, it vanishes.

After having found these general forms, which in particular fix the $\dot x$ dependence of the Finsler Lagrangians, we also derived an explicit example whose complete coordinate dependence has been determined in \eqref{eq:ex}. This example posseses a  physical lightcone structure and represent a viable Finsler spacetime.


With this, we have demonstrated for the first time that, in Finsler gravity,
even in the closest case to the Riemannian one -- represented by the
so-called Berwald-type metrics (whose defining property is that their geodesics are autoparallel curves of an affine connection on the spacetime
manifold) -- there exist vacuum spacetimes that are \emph{not} Ricci flat.
This is in sharp contrast with general relativity, where any vacuum solution
of the Einstein field equations must be Ricci flat.

So far, the collection of explicit solutions to the action-based Finsler gravity equation \eqref{eq:field_eqn} is not very large. With our findings here, we present new classes of explicit solutions, whose application to describe the gravitational field around physical compact objects will be the topic of a forthcoming paper.

\acknowledgments{
	The authors would like to acknowledge networking
	support by the COST Actions CA23130 \textquotedblleft Bridging high and low
	energies in search of quantum gravity (BridgeQG)\textquotedblright , CA21136
	\textquotedblleft Addressing observational tensions in cosmology with
	systematics and fundamental physics (CosmoVerse)\textquotedblright\ and
	CA24101 "Testing Fundamental Physics with Seismology" (FuSe). C.P. acknowledges support by the excellence cluster QuantumFrontiers of the German Research Foundation (Deutsche Forschungsgemeinschaft, DFG) under Germany's Excellence Strategy -- EXC-2123 QuantumFrontiers -- 390837967 and was funded by the Deutsche Forschungsgemeinschaft (DFG, German Research Foundation) - Project Number 420243324.
}

\appendix

\section{Connection coefficients and curvature components}\label{app:defs}
This section provides the definitions of the coefficients $k_i$ and $a_i$ used in Section \ref{ssec:classif_Berwald}.

The most general 4-dimensional, $SO(3)$-invariant, torsion-free affine connection $\Gamma$ on $M$ has, in spherical coordinates, the coefficients described by 12 smooth functions of $t$ and $r$, \cite{Cheraghchi:2022zgv}%
:%
\begin{align}
\Gamma _{tt}^{t}& =k_{1}(t,r), & \Gamma _{tr}^{t}& =k_{2}(t,r),  \notag \\
\Gamma _{rr}^{t}& =k_{3}(t,r), & \Gamma _{tt}^{r}& =k_{4}(t,r),  \notag \\
\Gamma _{rr}^{r}& =k_{5}(t,r), & \Gamma _{tr}^{r}& =k_{6}(t,r),  \notag \\
\Gamma _{\theta \theta }^{t}& =\tfrac{\Gamma _{\phi \phi }^{t}}{\sin
^{2}\theta }=k_{7}(t,r), & \Gamma _{\phi t}^{\phi }& =\Gamma _{\theta
t}^{\theta }=k_{8}(t,r),  \notag \\
\Gamma _{\phi r}^{\phi }& =\Gamma _{\theta r}^{\theta }=k_{9}(t,r), & \Gamma
_{\theta \theta }^{r}& =\tfrac{\Gamma _{\phi \phi }^{r}}{\sin ^{2}\theta }%
=k_{10}(t,r),  \notag \\
\sin \theta \Gamma _{t\theta }^{\phi }& =-\tfrac{\Gamma _{\phi t}^{\theta }}{%
\sin \theta }=k_{11}(t,r), & \Gamma _{\phi \phi }^{\theta }& =-\sin \theta
\cos \theta  \notag \\
\sin \theta \Gamma _{r\theta }^{\phi }& =-\tfrac{\Gamma _{r\phi }^{\theta }}{%
\sin \theta }=k_{12}(t,r), & \Gamma _{\theta \phi }^{\phi }& =\Gamma _{\phi
\theta }^{\phi }=\cot {\theta }\,.  \label{eq:appcon}
\end{align}
If the connection is Finsler metrizable, then there necessarily holds
\begin{equation}
	k_{11}=k_{12}=0,
\end{equation}
therefore, the most general Finsler metrizable, $SO(3)$-invariant connection is locally characterized, in spherical coordinates, by 10 smooth functions $k_i,\, i=1..10.$

The canonical nonlinear connection $N$ has the coefficients:%
\begin{equation}
N_{~b}^{a}=\Gamma _{~bc}^{a}x\,^{c}
\end{equation}%
and its curvature tensor $\mathcal{R}$ has the local components 
\begin{equation*}
R^{a}{}_{bc}=\delta _{c}N^{a}{}_{b}-\delta _{b}N^{a}{}_{c}=\partial
_{c}N^{a}{}_{b}-N^{d}{}_{c}\dot{\partial}_{d}N^{a}{}_{b}-\partial
_{b}N^{a}{}_{c}+N^{d}{}_{b}\dot{\partial}_{d}N^{a}{}_{c};
\end{equation*}%
in detail:%
\begin{equation}
\begin{array}{llll}
R_{~tr}^{t}=a_{1}\dot{t}+a_{2}\dot{r} & R_{~tr}^{r}=a_{3}\dot{t}+a_{4}\dot{r}
& R_{~tr}^{\theta }=a_{5}\dot{\theta} & R_{~tr}^{\phi }=a_{5}\dot{\phi} \\ 
R_{~t\theta }^{t}=a_{6}\dot{\theta} & R_{~t\theta }^{r}=a_{7}\dot{\theta} & 
R_{t\theta }^{\theta }=a_{8}\dot{t}+a_{9}\dot{r} & R_{t\theta }^{\phi }=0 \\ 
R_{~t\phi }^{t}=a_{6}\dot{\phi}\sin ^{2}\theta & R_{~t\phi }^{r}=a_{7}\dot{%
\phi}\sin ^{2}\theta & R_{~t\phi }^{\theta }=0 & R_{t\phi }^{\phi }=a_{8}%
\dot{t}+a_{9}\dot{r} \\ 
R_{~r\theta }^{t}=a_{10}\dot{\theta} & R_{~r\theta }^{r}=a_{11}\dot{\theta}
& R_{~r\theta }^{\theta }=a_{12}\dot{t}+a_{13}\dot{r} & R_{~r\theta }^{\phi
}=0 \\ 
R_{~r\phi }^{t}=a_{10}\dot{\phi}\sin ^{2}\theta & R_{~r\phi }^{r}=a_{11}\dot{%
\phi}\sin ^{2}\theta & R_{~r\phi }^{\theta }=0 & R_{~r\phi }^{\phi }=a_{12}%
\dot{t}+a_{13}\dot{r} \\ 
R_{~\theta \phi }^{t}=0 & R_{~\theta \phi }^{r}=0 & R_{~\theta \phi
}^{\theta }=-a_{14}\dot{\phi}\sin ^{2}\theta & R_{~\theta \phi }^{\phi
}=a_{14}\dot{\theta}\,.%
\end{array}%
,  \label{eq:appcurv}
\end{equation}%
where the quantities $a_{i}$ are functions of $t$ and $r$, expressed as: 
\begin{equation}
\begin{split}
a_{1}& =k_{1,r}-k_{2,t}+k_{3}k_{4}-k_{2}k_{6}\,, \\
a_{2}& =k_{2,r}-k_{3,t}+k_{2}^{2}+k_{3}k_{6}-k_{1}k_{3}-k_{2}k_{5}\,, \\
a_{3}& =k_{4,r}-k_{6,t}+k_{1}k_{6}+k_{4}k_{5}-k_{2}k_{4}-k_{6}^{2}\,, \\
a_{4}& =k_{6,r}-k_{5,t}+k_{2}k_{6}-k_{3}k_{4}\,, \\
a_{5}& =k_{8,r}-k_{9,t}\,, \\
a_{6}& =-k_{7,t}+k_{7}k_{8}-k_{1}k_{7}-k_{2}k_{10}\,, \\
a_{7}& =-k_{10,t}+k_{8}k_{10}-k_{4}k_{7}-k_{6}k_{10}\,, \\
a_{8}& =-k_{8,t}+k_{1}k_{8}+k_{4}k_{9}-k_{8}^{2}\,, \\
a_{9}& =-k_{9,t}+k_{2}k_{8}+k_{6}k_{9}-k_{8}k_{9}\,, \\
a_{10}& =-k_{7,r}+k_{7}k_{9}-k_{2}k_{7}-k_{3}k_{10}\,, \\
a_{11}& =-k_{10,r}+k_{9}k_{10}-k_{6}k_{7}-k_{5}k_{10}\,, \\
a_{12}& =-k_{8,r}+k_{2}k_{8}+k_{6}k_{9}-k_{8}k_{9}\,, \\
a_{13}& =-k_{9,r}+k_{3}k_{8}+k_{5}k_{9}-k_{9}^{2}\,, \\
a_{14}& =1+k_{7}k_{8}+k_{9}k_{10}\,.
\end{split}
\label{eq:a_i}
\end{equation}%
In the above, the subscripts $_{,t}$ and $_{,r}$ means partial
differentiation with respect to $t$ and $r$ respectively. The curvature
components $R_{b~cd}^{~a}$ of the affine connection are obtained from $%
R_{~cd}^{a}$ by $\dot{x}$-differentiation: 
\begin{equation}
R_{b~cd}^{~a}=\dot{\partial}_{b}R_{~cd}^{a};
\end{equation}%
e.g., $a_{1}=R_{t~tr}^{~t}$ etc.

\bigskip

Assume $k_{10}\not=0.$ Then, $a_{14}$ satisfies (\cite{Birkhoff-Th}, 
\cite{CQG2024} for more details): 
\begin{equation}
a_{14}=1+k_{10}^{2}\left( 2ab+c\right) ,~\partial
_{t}a_{14}=-2k_{10}a_{7}\left( c+2ab\right) ,\ \ \partial
_{r}a_{14}=-2k_{10}a_{11}\left( c+2ab\right) .  \label{eqs:a14}
\end{equation}%
where $a=\dfrac{k_{7}}{k_{10}},$ $b=\dfrac{k_{8}}{k_{10}},$ $c=\dfrac{%
k_{9}k_{10}-k_{7}k_{8}}{k_{10}^{2}}.$

Moreover, if the connection $\Gamma$ is metrizable by a non-Riemannian Finsler
function $L$, then it must obey:%
\begin{equation}
a_{6}=aa_{7},~~a_{8}=ba_{7},~\ a_{9}=\left( ab+c\right) a_{7},~\ \
a_{10}=aa_{11},~~a_{12}=ba_{11},~\ a_{13}=\left( ab+c\right) a_{11}.
\label{eq:metrizabillity_conds_a7}
\end{equation}

\section{Pseudo-Riemannian metric components\label{appx:A}}
In this Appendix we display the quantities derived from the pseudo-Riemannian part $\mathbf{a}$ (introduced in \eqref{A}) of a general Class 3 metric, that are needed for the calculations in Section \ref{sssec:RandRa}.

Using $\mathbf{A}=ve^{2\mathcal{K}}+\left( e^{\mathcal{G}}\mathcal{M}\right)
u^{2}$ and $u=\dot{t}-a\dot{r},$ $v=c\dot{r}^{2}+2b\dot{t}\dot{r}-w^{2},$ we
find the explicit expression of $\mathbf{A}$ in terms of $\dot{t},\dot{r}$
and $w^{2}=\dot{\theta}^{2}+\dot{\varphi}^{2}\sin ^{2}\theta :$%
\begin{equation}
\mathbf{A}=\left( \mathcal{M}e^{\mathcal{G}}\right) \dot{t}^{2}+2\left( be^{2%
\mathcal{K}}-\mathcal{M}e^{\mathcal{G}}a\right) \dot{t}\dot{r}+\left( 
\mathcal{M}e^{\mathcal{G}}a^{2}+ce^{2\mathcal{K}}\right) \dot{r}^{2}-e^{2%
\mathcal{K}}w^{2}  \label{eq:A_trw}
\end{equation}%
Then, the components of the pseudo-Riemannian metric $a=\dfrac{1}{2}Hess_{%
\dot{x}}\mathbf{A}$ are%
\begin{equation}
\left( a_{ab}\right) =\left( 
\begin{array}{cccc}
\mathcal{M}e^{\mathcal{G}} & -\mathcal{M}e^{\mathcal{G}}a+e^{2\mathcal{K}}b
& 0 & 0 \\ 
-\mathcal{M}e^{\mathcal{G}}a+e^{2\mathcal{K}}b & \mathcal{M}e^{\mathcal{G}%
}a^{2}+e^{2\mathcal{K}}c & 0 & 0 \\ 
0 & 0 & -e^{2\mathcal{K}} & 0 \\ 
0 & 0 & 0 & -e^{2\mathcal{K}}\sin ^{2}\theta%
\end{array}%
\right) .
\end{equation}

\begin{itemize}
\item Determinant: 
\begin{equation}
\det \left( a_{ab}\right) =e^{6\mathcal{K}}\Delta \sin ^{2}\theta ,~\ \ \
\Delta :=\mathcal{M}e^{\mathcal{G}}\left( 2ab+c\right) -b^{2}e^{2\mathcal{K}%
}.  \label{def_Delta}
\end{equation}

\item Inverse metric components:%
\begin{equation}\label{eq:ainv}
\left( a^{ab}\right) =\left( 
\begin{array}{cccc}
\dfrac{1}{\Delta }\left( \mathcal{M}e^{\mathcal{G}-2\mathcal{K}%
}a^{2}+c\right) & \dfrac{1}{\Delta }\left( \mathcal{M}e^{\mathcal{G}-2%
\mathcal{K}}a-b\right) & 0 & 0 \\ 
\dfrac{1}{\Delta }\left( \mathcal{M}e^{\mathcal{G}-2\mathcal{K}}a-b\right) & 
\dfrac{1}{\Delta }\mathcal{M}e^{\mathcal{G}-2\mathcal{K}} & 0 & 0 \\ 
0 & 0 & -e^{-2\mathcal{K}} & 0 \\ 
0 & 0 & 0 & -\dfrac{e^{-2\mathcal{K}}}{\sin ^{2}\theta }%
\end{array}%
\right) .
\end{equation}
\end{itemize}

\section{Cone structure of power law metrics\label{appx:Class_1}}

This section explicitly proves the statement in Section \ref{sec:AS}, that asymptotically flat Class 1 (power law) metrics have degenerate light cones $L=0$ at each point.\\

To this aim, we start from the definition of Class 1 metrics
\begin{equation}
L=\theta \left( t,r\right) u^{2-2\lambda }\left( v+\rho u^{2}\right)
^{\lambda },  \label{L_power}
\end{equation}%
where:%
\begin{equation}
u=\dot{t}-a\dot{r},~\ \ v=2b\dot{t}\dot{r}+c\dot{r}^{2}-w^{2},~\ \ \rho =%
\dfrac{E}{D}
\end{equation}%
(with $D\not=0$) and%
\begin{equation}
\lambda =\dfrac{F}{D}=\dfrac{aa_{3}-a_{1}}{aa_{3}-a_{1}+a_{5}}=const.
\label{lambda}
\end{equation}%
is a constant. Since the case $\lambda =1$ corresponds to pseudo-Riemannian
metrics, the only interesting case to investigate is $\lambda \not=1,$ that
is:%
\begin{equation}
a_{5}\not=0.
\end{equation}

More precisely, we investigate the realization of the \emph{future-pointing
cones axiom}:

\begin{axiom}
At each $x\in M,$ there exists a non-empty connected conic set $\mathcal{T}%
_{x}$ on which $L>0,$ $g$ has Lorentzian $\left( +,-,-,-\right) $ signature
and such that 
\begin{equation}
\underset{\dot{x}\rightarrow \partial \mathcal{T}_{x}}{\lim }L\left( x,\dot{x%
}\right) =0.
\end{equation}%
To have a physically reasonable model, the future-pointing time direction $%
\dot{t}>0,$ $\dot{r}=\dot{\theta}=\dot{\varphi}=0$ must belong to $\mathcal{T%
}_{x},$ at any $x\in M.$
\end{axiom}

Actually, as we have seen in \cite{Hohmann:2021zbt}, $\mathcal{T}_{x}$ must be
convex.

\bigskip

In order to check the above axiom, let us first fix an arbitrary $x\in M$
and solve the system:

\begin{equation}
A=B=C=0
\end{equation}%
in the unknowns $a_{1},...,a_{5}.$ Taking into account that $b$ and $c$
cannot simultaneously vanish (as this would lead to $\det (g_{ab})=0,$ 
see \cite{Cheraghchi:2022zgv}), this leads to three possibilities:%
\begin{equation}
b\not=0,~\ a_{1}=-\dfrac{aba_{3}-ba_{5}+ca_{3}}{b},a_{2}=\dfrac{a\left(
ab+c\right) a_{3}}{b},~\ a_{4}=a_{5}-aa_{3},~\ a_{3},a_{5}\in \mathbb{R}, 
\tag{(i)}
\end{equation}%
or%
\begin{equation}
b=0\ \Rightarrow ~a_{2}=a\left( a_{5}-a_{1}\right)
,a_{3}=0,a_{4}=a_{5},a_{1},a_{5}\in \mathbb{R},  \tag{(ii)}
\end{equation}%
or%
\begin{equation}
2ab+c=0,~\ a_{1}=-a_{4}+2a_{5},~a_{2}=a^{2}a_{3}+2aa_{4}-2aa_{5},~\ \
a_{3},a_{4},a_{5}\in \mathbb{R}.  \label{iii}
\end{equation}

Case (iii) is excluded immediately, as it leads to $a_{14}=1+k_{10}^{2}%
\left( 2ab+c\right) =1,$ which is in contradiction with our asymptotic
flatness assumption.

The first two cases need a more detailed analysis, which we will do in the
following.

\begin{proposition}
If $L$ is an asymptotically flat power law spacetime metric (\ref{L_power}),
then the inequality%
\begin{equation}
2ab+c<0
\end{equation}%
holds for all values of $t$ and $r$.
\end{proposition}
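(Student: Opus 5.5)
The plan is to reduce the claim to a sign statement about $a_{14}-1$ and obtain that sign from the differential identities \eqref{eqs:a14} together with asymptotic flatness. Since we work with $k_{10}\not=0$, the first identity in \eqref{eqs:a14} gives
\begin{equation*}
2ab+c=\dfrac{a_{14}-1}{k_{10}^{2}},
\end{equation*}
so the inequality $2ab+c<0$ is equivalent to $a_{14}<1$ at every point $(t,r)$. The proof therefore splits into two steps: show that $a_{14}-1$ has constant sign, and identify that sign from the behaviour as $r\rightarrow\infty$.

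\emph{Step 1 (constant sign).} Substitute $c+2ab=(a_{14}-1)/k_{10}^{2}$ into the derivative formulas of \eqref{eqs:a14}. Writing $h:=a_{14}-1$, this gives the linear homogeneous first-order system
\begin{equation*}
\partial_{t}h=-\dfrac{2a_{7}}{k_{10}}\,h,\qquad \partial_{r}h=-\dfrac{2a_{11}}{k_{10}}\,h,
\end{equation*}
whose coefficients are smooth wherever $k_{10}\not=0$. Along any coordinate line, $h$ solves a linear ODE and is therefore $h_{0}\exp(\int\cdots)$. Consequently, if $h$ vanishes at one point of a $t$- or $r$-line, it vanishes on the whole line. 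Joining points of the (connected) $(t,r)$ domain by piecewise coordinate paths, we conclude that either $h\equiv0$ or $h$ never vanishes. Since $h$ is continuous, in the second case it has a fixed sign on the whole domain.

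\emph{Step 2 (identifying the sign).} The alternative $h\equiv0$ means $a_{14}\equiv1$, which contradicts $\lim_{r\rightarrow\infty}a_{14}=0$ from \eqref{def:asympt_flatness}. This is the same mechanism that excluded case (iii) pointwise. Hence $h$ never vanishes. Asymptotic flatness gives $h\rightarrow-1$ as $r\rightarrow\infty$ along any $t=$ const line, so $h$ is negative there, and by Step 1 $h<0$ everywhere. Therefore $2ab+c=h/k_{10}^{2}<0$ for all $t,r$.

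The main obstacle I expect is the global part of Step 1: the domain must be connected by coordinate paths that reach the asymptotic region $r\rightarrow\infty$, and $k_{10}$ must stay nonzero along them. If the domain is only an open connected subset of the $(t,r)$ half-plane with ends, I would argue with polygonal coordinate paths inside it. Alternatively, I would use that $\log|h|$ is a potential for the closed 1-form $-2(a_{7}\,dt+a_{11}\,dr)/k_{10}$ on the set $\{h\neq0\}$. That set is then open and closed (by the ODE uniqueness above), hence the whole domain.

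If needed, I would also double-check that the formulas \eqref{eqs:a14} remain valid in cases (i) and (ii) for Class 1. They are stated for any $k_{10}\not=0$, so no extra work should be required there.
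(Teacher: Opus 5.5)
Your proof is correct, but it takes a genuinely different route from the paper's. The paper also starts from $a_{14}=1+k_{10}^{2}(2ab+c)$ and $\lim_{r\rightarrow\infty}a_{14}=0$ to get $2ab+c<0$ for large $r$. It then excludes zeros of $2ab+c$ pointwise, using the structure of the power law metric. For $b\neq0$, the subcase (i) relations give $\rho=E/D=b^{2}/(2ab+c)$, which would blow up. For $b=0$, a zero forces $b=c=0$, so $L$ would be degenerate.

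You instead use the identities \eqref{eqs:a14} to show that $h=a_{14}-1$ obeys a linear homogeneous first-order system, so it vanishes either identically or nowhere. These identities are legitimate here: beyond the definitions of the $a_i$, they only need the relations $a_{6}=aa_{7},\dots,a_{13}=(ab+c)a_{11}$, which are part of \eqref{eq:ABC}. So your worry about Class 1 can be settled.

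What your route buys is the following. It never uses the explicit form of $L$, the split into (i)/(ii), or nondegeneracy. It also proves more: for any Case I connection metrizable by a non-Riemannian $L$, $2ab+c$ is either identically zero or nowhere zero on a connected domain where $k_{10}\neq0$. This disposes of case (iii) at the same time. It also avoids a point the paper leaves implicit. At a zero of $2ab+c$, the combination $A+bB=(2ab+c)(aa_{3}+a_{4}-a_{5})$ no longer forces $a_{4}=a_{5}-aa_{3}$. So the subcase (i) formula for $\rho$ is available there only by continuity from nearby points where $2ab+c\neq0$. The paper's route, for its part, is purely algebraic and needs no derivative identities.

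Finally, the global obstacle you anticipate is not needed. Along each line $t=$ const, the $r$-equation alone gives
$h(t,r)=h(t,r_{1})\exp\bigl(-2\int_{r_{1}}^{r}a_{11}k_{10}^{-1}(t,\tilde r)\,d\tilde r\bigr)$.
Choosing $r_{1}$ large makes $h(t,r_{1})<0$, hence $h<0$ on the whole line. This requires only what the paper tacitly assumes as well: each $r$-line extends to $r\rightarrow\infty$ with $k_{10}\neq0$ along it.
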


\begin{proof}
Using the asymptotic flatness hypothesis, we must have, in particular, $%
\underset{r\rightarrow \infty }{\lim }a_{14}=0$, which, taking again into
account the identity $a_{14}=1+k_{10}^{2}\left( 2ab+c\right) $ (see (\ref%
{a14})), implies that, for any $t,$ there exists a value $r_{0}=r_{0}\left(
t\right) $ such that for all $r\geq r_{0},$ the desired inequality holds.
There remains the question whether $2ab+c$ can change sign. To see this, let
us assume that there exists some value $\left( t_{0},r_{0}\right) $ where $%
2ab+c=0$; but, at these points, using (i), respectively, (ii), we find:

(i) If $b\left( t,r_{0}\right) \not=0,$ then%
\begin{equation}
\rho =\dfrac{E}{D}=\dfrac{ba_{3}}{aa_{3}-a_{1}+a_{5}}=\dfrac{b^{2}}{c+2ab}%
\rightarrow \infty ,  \label{eq:rho_subcase_1}
\end{equation}
which makes the formula for $L$ ill-defined at $\left( t,r_{0}\right) .$

(ii) If $b\left( t,r_{0}\right) =0,$ then $2ab+c=0$ implies also $c\left(
t,r_{0}\right) =0,$ which entails that at the respective points, $L\ $is
degenerate.

We conclude that none of these can happen, hence $2ab+c$ must keep a
negative sign for all $r.$
\end{proof}
\\
We are now able to investigate the cones $L=0$ at each $x\in M.$ These are
contained in the union of the sets:

\begin{enumerate}
\item The hyperplane%
\begin{equation}
\left( H\right) :u:=\dot{t}-a\dot{r}=0.
\end{equation}

\item The quadric%
\begin{equation}
\left( C\right) :v+\rho u^{2}=0.
\end{equation}
\end{enumerate}

Let us investigate these cones in each of the cases $b\not=0$ and $b=0.$

\bigskip

\textbf{Case (i):} $b\not=0.$

In this situation, we obtain, we have seen above that $\rho =\dfrac{b^{2}}{%
c+2ab},$ which leads to:%
\begin{equation}
v+\rho u^{2}=\dfrac{\left( b\dot{t}+\left( ab+c\right) \dot{r}\right) ^{2}}{%
2ab+c}-\dot{\theta}^{2}-\dot{\varphi}^{2}\sin ^{2}\theta .
\label{aux_metric}
\end{equation}%
Using the inequality $2ab+c<0,$ this implies that $v+\rho u^{2}\leq 0$ \
holds identically on the domain of definition of $L.$ In particular, the
quadric $\left( C\right) :v+\rho u^{2}=0$ is the line 
\begin{equation}
\left( C\right) :b\dot{t}+\left( ab+c\right) \dot{r}=0,~\ \dot{\theta}=0,~\ 
\dot{\varphi}=0,
\end{equation}%
which makes the cones of $L$ degenerate. Thus, this case does not lead to
any asymptotically flat Finsler spacetime functions.

\textbf{Case (ii):} $b=0$

In this case, the inequality $2ab+c<0$ entails
\begin{equation}
	c<0.
\end{equation}
Moreover, we get \ $E=ba_{3}=0,$ hence $\rho =0$ and%
\begin{equation}
L=\theta \left( t,r\right) u^{2-2\lambda }v^{\lambda }.
\end{equation}

Since $c<0$, the quadric $\left( C\right) :v+\rho u^{2}\equiv c\dot{r}^{2}-\dot{\theta}%
^{2}-\dot{\varphi}^{2}\sin ^{2}\theta =0$ is just the time axis $%
\dot{t}:$%
\begin{equation}
\left( C\right) :\dot{r}=0,~\dot{\theta}=0,~\dot{\varphi}=0,
\end{equation}%
i.e., we get again a degenerate cone $\mathcal{T}_{x}$.

\begin{conclusion}
There is no asymptotically flat, non-Riemannian power law function $L$ with
nondegenerate future-pointing timelike cones.
\end{conclusion}

\bigskip

\bigskip

\bibliography{nonRicciSpherical}





\end{document}